\documentclass[lettersize,journal]{IEEEtran}
\usepackage{amsmath,amsfonts}
\usepackage{algorithmic}
\usepackage{algorithm}
\usepackage{array}
\usepackage[caption=false,font=normalsize,labelfont=sf,textfont=sf]{subfig}
\usepackage{textcomp}
\usepackage{stfloats}
\usepackage{url}
\usepackage{verbatim}
\usepackage{graphicx}
\usepackage{cite}
\newcommand {\Define} {\stackrel {\Delta} {=}  }
\newcommand{\mya}{\mathrel{\overset{\makebox[0pt]{{\tiny(a)}}}{=}}}
\newcommand{\myb}{\mathrel{\overset{\makebox[0pt]{{\tiny(b)}}}{=}}}
\newcommand{\myc}{\mathrel{\overset{\makebox[0pt]{{\tiny(c)}}}{=}}}

\newtheorem{theorem}{Theorem}
\newtheorem{lemma}{Lemma}

\begin{document}

\title{Inter-frame Channel Prediction for Zak-OTFS}
\author{\IEEEauthorblockN{Muhammad Ubadah\IEEEauthorrefmark{1} and Saif Khan Mohammed\IEEEauthorrefmark{1}\IEEEauthorrefmark{2}
}
\\
\IEEEauthorblockA{\IEEEauthorrefmark{1}Department of Electrical Engineering, Indian Institute of Technology Delhi, India}\\
\IEEEauthorblockA{\IEEEauthorrefmark{2}Cohere Technologies Inc., San Jose, CA, USA}\\
\thanks{M. Ubadah and S. K. Mohammed are with the Department of Electrical Engineering, Indian Institute of Technology Delhi, India (E-mail: eez198356@ee.iitd.ac.in, saifkmohammed@gmail.com). S. K. Mohammed is currently with Cohere Technologies Inc., CA, USA, on extra-ordinary leave from I.I.T. Delhi.}
}




\maketitle

\begin{abstract}
Zak-Orthogonal Time Frequency Space (OTFS) modulation is known to be robust to Doppler spread in high mobility scenarios when compared to Orthogonal Frequency Division Multiplexing (OFDM). This is due to the fact that the channel response to a Zak-OTFS carrier within a frame can be accurately estimated from the channel response to another carrier within the same frame. However, an important open problem and question is whether inter-frame channel prediction is possible with Zak-OTFS, i.e., is it possible to accurately predict the channel response to a Zak-OTFS carrier in a frame based on knowledge of the channel response to some Zak-OTFS carrier in \emph{another} frame (i.e., not the same frame).

In this paper we show that indeed inter-frame channel prediction is possible. We show that the effective DD domain channel filter coefficients vary in a deterministic manner as we move from current to future frames in time and frequency. We also show that the subspace spanned by channel filter coefficients of consecutive frames in time/frequency is invariant to discrete shifts in time and frequency. We exploit the deterministic variation and subspace invariance to propose a novel deterministic ESPIRIT-type method which uses the effective DD domain channel filter taps/coefficients estimated in training frames (i.e., current/past frames in time and frequency having both pilot and data carriers) to predict the  effective DD domain channel filter for frames which are several tens of frames in future and several tens of frames away in frequency.

Exhaustive numerical simulations for the Vehicular-A channel reveal that even for a high Doppler spread of $2$ kHz, the normalized prediction error for a frame $120$ ms into future and $432$ MHz away is roughly $-14.2$ dB when the pilot power to noise power ratio is $15$ dB in the training frame. An application of the proposed prediction is to reduce the pilot overhead since the prediction frames (for which channel is predicted) do not need dedicated pilot. Simulation of coded Zak-OTFS frames reveal that for a Doppler spread of $2$ kHz and total transmit power to noise power ratio of $15$ dB, the effective spectral efficiency (SE) for a prediction frame $60$ ms in future and $216$ MHz away is in fact $30$ percent higher than the SE of traditional frames which have both  pilot and data.
\end{abstract}

\begin{IEEEkeywords}
Zak-OTFS, prediction, pilot, training.
\end{IEEEkeywords}
\section{Introduction}
Next generation communication systems are expected to deploy Artificial Intelligence (AI) and Machine Learning (ML) to achieve significant improvements in network performance (spectral and energy efficiency, latency, reliability and scalability) \cite{intro1,intro2}. 
These improvements are achieved through integration of AI into the protocol stack thereby enabling smart resource allocation, predictive traffic management, dynamic beamforming,  interference management, smart sensing\cite{QW2026,SM2025}.
Wireless channels in 6G and beyond are expected to be highly time and frequency selective due to scenarios where the channel delay and Doppler spread can be high \cite{IMT2030}. 
Some examples include, satellite based non-terrestrial networks, high speed train, Vehicle-to-everything communication, (V2X), Unmanned Aerial Vehicle (UAV) communication, High-Altitude-Platform-Stations (HAPS) communication, Aircraft to Ground communication.

Fast time- and frequency domain variations of the communication channel gain, result in higher density of pilots in time and frequency which reduces network throughput in Orthogonal Frequency Division Multiplexing (OFDM) based systems. This also increases the frequency and overhead of channel feedback between the network and the user terminals. Fast time- and frequency variations also degrades reliability due to fading in time and frequency. Higher channel Doppler spreads result in inter-carrier interference (ICI) between OFDM carriers which also degrades throughput and reliability. Outdated and inaccurate channel state information (CSI) at the network side results in sub-optimal resource management, beamforming and interference management.
Therefore, AI-based performance improvements are only feasible if the network can predict the channel state information (CSI) for future frames in time and also for frames which are separated in frequency (for example in different frequency bands).

Most CSI prediction methods can be classified as either being based on Autoregression (AR) or based on AI models (Deep Neural Networks (DNN), Long Short-term memory (LSTM) networks, transformer models) \cite{XWei24, HHou26,Bchong26}. Prediction based on AI models suffers from practical infeasibility as they need very large amount of historical CSI to train their models. AI based prediction also has very high computational complexity due to which predicted CSI becomes obsolete by the time network is ready to use it. Also, in high mobility scenarios the prediction horizon of AI based prediction is small, i.e., the prediction accuracy degrades severely even for frames which are only a few milliseconds in future \cite{HHou26, Bchong26}.
AR based CSI prediction methods have lower complexity than AI based methods and also do not require as much historical CSI. However, their performance deteriorates in non-line-of-sight high-mobility channels and therefore their prediction horizon is also limited \cite{Paiva25,Zchen,Wang23}.

Recently proposed Zak - Orthogonal Time Frequency Space (OTFS) modulation is
based on signaling in the delay-Doppler (DD) domain where information is carried by narrow quasi-periodic pulses in the DD domain \cite{zakotfs1,zakotfs2, otfsbook}.
A Zak-OTFS frame consists of a finite number of quasi-periodic DD domain pulses located on a DD domain lattice/grid, with each DD pulse carrying a different information symbol.
The time-domain (TD) realization of a Zak-OTFS DD pulse is a periodic pulse train modulated by a sinusoid/tone (referred to as a pulsone).
All pulsones in a Zak-OTFS frame span the same
time-frequency (TF) interval, and pulsones in different Zak-OTFS frames occupy non-overlapping TF intervals.

Zak-OTFS is known to be more robust to channel delay and Doppler spread when compared to OFDM.
This is because, even in doubly-spread channels, the channel response to a Zak-OTFS pulsone can be accurately estimated from the known channel response to another pulsone in the same frame. This is not possible with sinusoidal carriers in OFDM, since the channel response to a subcarrier cannot be accurately estimated from the response to another subcarrier. This is why in OFDM we need to repeat pilot subcarriers in both time and frequency, with the repetition frequency increasing with increasing channel delay and Doppler spread which thereby increases channel estimation overhead.
On the other hand, in a Zak-OTFS frame, it suffices to dedicate a single pulsone carrier as a pilot carrier since the response to this pilot carrier can be used to estimate the response to other pulsone carriers which carry data. Also, the
Zak-OTFS I/O relation is non-selective even in doubly-spread channels, i.e., the energy of the channel response to a Zak-OTFS pulsone is equal to the energy of the response to any other pulsone \cite{zakotfs1, zakotfs2, otfsbook}. We describe the system model for a doubly-spread channel in Section \ref{secsysmodel} and multi-frame Zak-OTFS modulation in Section \ref{seczakotfs}.

Although, it is known how to estimate the channel response to a Zak-OTFS pulsone within a frame from the response to a pulsone within the same frame, it is an open problem on how to extend this feature of Zak-OTFS modulation across multiple-frames. In other words, to the best of our knowledge there is no known work in literature which allows us to estimate/predict the channel response to a Zak-OTFS pulsone in one frame from the known channel response to a Zak-OTFS pulsone in another frame. Within a frame, the channel response to a pulsone is described in terms of DD domain twisted convolution between the DD representation of the pulsone and an effective DD domain channel filter for that frame. Therefore, this open problem can be solved if we have a method to predict the effective DD domain channel filter in a frame from the knowledge of the channel filter in another frame. 

In this paper, we solve this open problem
by proposing a method which can accurately predict/estimate the effective DD domain channel filter in a future frame in another frequency band from the knowledge of effective DD domain channel filter acquired in few training frames. We exploit the fact that the
effective DD domain channel filter taps/coefficients are in one-to-one correspondence with the DD signature/profile of the underlying physical multi-path channel which varies slowly in time/frequency.\footnote{\footnotesize{The physical delay and Doppler shift of each channel path depends on the geometry of reflectors around the transceiver, the relative speed and orientation of the transceiver, which change at a slower time-scale when compared to the time-coherence of sinusoidal carriers in OFDM.}}


In Section \ref{seciorel} we show the novel result that the coefficient for each DD tap of the effective DD domain channel filter varies from one frame to another in a deterministic manner. To be precise, for the contribution of the $i$-th channel path to each DD tap, an extra phase of $2 \pi \nu_i T'$ gets added as we move by one frame into the future and an extra phase of $2 \pi \tau_i B'$ gets added if we move backward in frequency by one frame. Here $T'$ and $B'$ denote the time duration and bandwidth of each frame respectively. Also, $\tau_i$ and $\nu_i$ denote the delay and Doppler shift of the $i$-th channel path.

We also make the novel observation that the
subspace spanned by the DD domain channel filter coefficients of consecutive frames in time/frequency is invariant of discrete shifts in time and frequency.
This subspace invariance and the deterministic variation of the DD domain filter coefficients allows us to propose an ESPIRIT-type\footnote{\footnotesize{ESPIRIT stands for Estimation of Signal Parameters Via Rotational Invariance Techniques, and is a well known technique used for time-series analysis, direction-of-arrival estimation \cite{espirit_paper}.}} method to accurately estimate the phase gain parameters $2 \pi \nu_i T'$ and $2 \pi \tau_i B'$ (modulo $2 \pi $) for all channel paths, given the estimates of the channel filter taps for the current and past $Q$ frames in time and $Q$ frames in frequency (these $(2Q+1)$ frames are referred to as training frames).
The training frames carry both pilot and data.
The prediction frames (i.e., frames for which we predict the DD channel filter) need not carry pilot. The proposed prediction method is described in detail in Section \ref{propmethod}. The overall prediction complexity is only linear in the number of DD taps in the channel filter and cubic in the number of training frames.

Through exhaustive numerical simulations for the six-path Vehicular-A channel in Section \ref{secsim}, we show the effectiveness of the proposed method in accurately predicting the effective DD domain channel filter for Zak-OTFS frames up to roughly $100$ ms in future and several hundred MHz away in frequency.
Numerical simulations show that the derived estimates of the phase parameters are close to the Cramer Rao Lower Bound (CRLB) when the number of training frames is sufficiently larger than the number of channel paths.
Even in high mobility scenarios (channel Doppler spread of $2$ kHz), the proposed prediction method (with $Q=30$ and a pilot to noise ratio of $15$ dB in the training frames) achieves a normalized prediction error of $-14.2$ dB for a prediction frame which is $120$ ms in future and $432$ MHz away in frequency. Simulation of LDPC coded Zak-OTFS frames shows significant improvement in the spectral efficiency of prediction frames when compared to traditional frames where DD channel filter is estimated using dedicated pilot. This is because, traditional pilot based frames need to also dedicate guard DD  pulse carriers around the pilot DD pulse carrier so as to avoid interference between data DD pulse carriers and the pilot carrier. Since guard DD pulse carrier do not carry data, they are an overhead which reduces the effective spectral efficiency (SE) of traditional frames. In comparison, prediction frames do not need pilot and therefore no guard DD carriers are required, resulting in no overhead. Through simulations we see that even for a high channel Doppler spread of $2$ kHz, a prediction frame $60$ ms in future and $216$ MHz away achieves $30$ percent higher SE than a traditional frame for a total transmit power to noise ratio of $18$ dB.   

 The proposed prediction method has several applications:
 \begin{itemize}
\item The frames where channel is predicted, need not carry pilot which improves SE significantly as the pilot power can be allocated to data and also more number of data symbols can be transmitted in each frame. In the absence of pilot, the PAPR also reduces. The improvement in SE due to reduction in pilot overhead is even more significant for MIMO systems.
\item In the downlink of Time Division Duplexed (TDD) systems, the transmitter can predict the downlink channel based on the acquired estimates in the uplink and can use it for downlink precoding which can significantly reduce the receiver equalization complexity. Also, unlike traditional systems, there is no overhead of CSI feedback from the receiver to the transmitter.
\item Channel prediction at the network side (based on uplink packets from the user terminals), can be helpful in optimizing resource allocation.
\item In traditional TDD systems, channel reciprocity allows us to design the downlink beamforming based on pilots received in the uplink. However, in Frequency Division Duplexed (FDD) systems this is not possible and CSI estimates have to be fed back from the user terminal back to the network.
This CSI feedback overhead can be high in MIMO and Massive MIMO systems. Since the proposed prediction method can also predict the DD domain channel coefficients for frames transmitted in different frequency bands several hundreds of MHz apart, no/minimal CSI feedback is required which improves the overall throughput significantly.
 \end{itemize}
    
\section{System model}
\label{secsysmodel}
Consider the physical DD spreading function to be
\begin{eqnarray}
\label{eqn1}
    h_{\mbox{\scriptsize{phy}}}(\tau, \nu) & = & \sum\limits_{i=1}^P h_i \, \delta(\tau - \tau_i) \, \delta(\nu  - \nu_i),
\end{eqnarray}where $P$ is the number of physical channel paths and $h_i, \tau_i, \nu_i$ denote the complex gain, delay and Doppler shift of the $i$-th path. The input time-domain (TD) signal $s(t)$ and output $r(t)$ are related by \cite{Bello}
\begin{eqnarray}
\label{eqn2}
    r(t) & \hspace{-3mm} = &  \hspace{-3mm} \iint h_{\mbox{\scriptsize{phy}}}(\tau, \nu) \, s(t - \tau) \, e^{j 2 \pi \nu (t - \tau)} \, d\tau \, d\nu \, + \, z(t)
\end{eqnarray}where $z(t)$ is AWGN. 

\section{Multi-frame Zak-OTFS modulation}
\label{seczakotfs}
We consider multiple Zak-OTFS frames, each of time duration $T'$ and bandwidth $B'$. The $(n,m)$-th Zak-OTFS frame is approximately localized to the TF interval
\begin{eqnarray}
    {\mathcal I}_{n,m} & \Define & \left\{ (t,f) \, {\Big |} \, \vert t - nT' \vert < \frac{T'}{2} \,,\, \vert f - mB' \vert < \frac{B'}{2}  \right\},
\end{eqnarray}i.e., approximately localized to the
time interval $\left[ nT' - T'/2 \,,\, nT' + T'/2 \right]$ and frequency interval $\left[ mB' - B'/2 \,,\, mB' + B'/2 \right]$, $(n,m) \in {\mathbb Z}$.

Zak-OTFS modulation is parameterized by the delay parameters, delay period $\tau_p > 0$ and Doppler period $\nu_p = 1/\tau_p$ and the DD domain information lattice  \cite{zakotfs1, zakotfs2}
\begin{eqnarray}
\Lambda & = & \left\{ \left(\frac{k}{B} \,,\, \frac{l}{T} \right) \, \vert \, k,l \in {\mathbb Z} \right\},
\end{eqnarray}where $1/B$ and $1/T$ denote the spacing between the lattice points along the delay and Doppler axis respectively.
Also, let $M \Define B \tau_p$ and $N \Define T \nu_p$ be integers. $x_{n,m}[k,l]$, $k=0,1,\cdots, M-1$, $l=0,1,\cdots, N-1$
denotes the $MN$ information symbols to be transmitted on the $(n,m)$-th frame.
These information symbols are embedded into a discrete DD domain signal \cite{zakotfs1, zakotfs2, otfsbook}
\begin{eqnarray}
    x_{n,m,dd}[k,l] & \hspace{-3mm} \Define & \hspace{-3mm} \sum\limits_{n',m' \in {\mathbb Z}} \sum\limits_{k'=0}^{M-1} \sum\limits_{l'=0}^{N-1} {\Big (} x_{n,m}[k', l'] \, e^{j \frac{2 \pi n' l'}{N}} \nonumber \\
    & & \hspace{4mm} \, \delta[k - k' - n'M] \, \delta[l - l' - m'N] \, {\Big )},
\end{eqnarray}which is quasi-periodic with periods $M$ and $N$ along the delay and Doppler axis respectively, i.e.
\begin{eqnarray}
x_{n,m,dd}[k+n'M,l+m'N] & = & e^{j \frac{2 \pi n' l}{N}} \, x_{n,m,dd}[k,l]
\end{eqnarray}for all $n',m' \in {\mathbb Z}$. This discrete DD domain information signal is then lifted to the points of the information lattice $\Lambda$ resulting in the continuous DD domain information signal
\begin{eqnarray}
    x_{\mbox{\scriptsize{n,m,dd}}}(\tau, \nu) & \hspace{-3mm} = & \hspace{-4mm} \sum\limits_{k, l \in {\mathbb Z}}  \hspace{-2mm} x_{\mbox{\scriptsize{n,m,dd}}}[k,l] \, \delta\left( \tau - \frac{k \tau_p}{M}\right) \delta\left( \nu - \frac{l \nu_p}{N}\right)
\end{eqnarray}which is a continuous DD domain quasi-periodic function, with periods $\tau_p$ and $\nu_p$ along the delay and Doppler axis respectively, i.e.
\begin{eqnarray}
    x_{\mbox{\scriptsize{n,m,dd}}}(\tau + n' \tau_p, \nu + m' \nu_p) & = & e^{j 2 \pi n' \nu \tau_p} \,  x_{\mbox{\scriptsize{n,m,dd}}}(\tau , \nu )
\end{eqnarray}for all $n',m' \in {\mathbb Z}$. This is then filtered (pulse shaping) with a DD filter $w_{n,m,tx}(\tau, \nu)$ so that the transmitted TD signal is approximately localized to the TF interval ${\mathcal I}_{n,m}$. 
The filtered information signal is given by
\begin{eqnarray}
\label{eqn8}
    x_{n,m,dd}^{w}(\tau , \nu ) & = & w_{n,m,tx}(\tau, \nu) \, *_{\sigma} x_{n,m,dd}(\tau , \nu ),
\end{eqnarray}where $*_{\sigma}$ denotes the twisted convolution operation defined for any two DD domain functions $a(\tau, \nu)$ and $b(\tau, \nu)$ as
\begin{eqnarray}
\label{eqneqn9}
c(\tau, \nu) & = & a(\tau, \nu) *_{\sigma} b(\tau, \nu) \nonumber \\
& & \hspace{-18mm} = \iint a(\tau', \nu') \, b(\tau - \tau', \nu - \nu') \, e^{j 2 \pi \nu' (\tau - \tau')} \, d\tau' \, d\nu'.
\end{eqnarray}
The transmitted TD signal for the $(n,m)$-th frame is the TD realization of $x_{n,m,dd}^{w}(\tau , \nu )$, i.e., its inverse Zak-transform (${\mathcal Z}_t^{-1}$) given by \cite{Zak67, Janssen88, DerivOTFS}
\begin{eqnarray}
\label{eqn10}
    s_{n,m}(t) & = & {\mathcal Z}_t^{-1}\left(  x_{n,m,dd}^{w}(\tau , \nu ) \right) \nonumber \\
    & = & \sqrt{\tau_p} \int\limits_{0}^{\nu_p} x_{n,m,dd}^{w}(t , \nu ) \, d\nu.
\end{eqnarray}The next theorem gives the expression for the pulse shaping filter for the $(n,m)$-th frame.
\begin{theorem}
\label{thm1}
The pulse shaping filter
\begin{eqnarray}
\label{eqn1212}
w_{n,m,tx}(\tau,\nu) & \Define & \delta(\tau - nT') \delta(\nu - mB') *_{\sigma} w_{0,0,tx}(\tau, \nu), \nonumber \\
&  & \hspace{-17mm} = e^{j2\pi m B' (\tau - nT')} \, w_{0,0,tx}(\tau - nT', \nu - mB'), 
\end{eqnarray}guarantees that
the signal $s_{n,m}(t)$ transmitted for the $(n,m)$-th frame is approximately localized to the TF interval ${\mathcal I}_{n,m}$. Note that $w_{0,0,tx}(\tau, \nu)$ is the transmit pulse shaping filter for the $(0,0)$-th frame.
\end{theorem}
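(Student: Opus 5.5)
The plan is to show that the TD signal of frame $(n,m)$ is just the TD signal of frame $(0,0)$ shifted in time by $nT'$ and modulated by $e^{j2\pi mB't}$. Once this is shown, localization of $s_{0,0}(t)$ to ${\mathcal I}_{0,0}$ (which is what $w_{0,0,tx}$ is designed for) immediately gives localization of $s_{n,m}(t)$ to ${\mathcal I}_{n,m}$.

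\textbf{Step 1.} I first verify the second equality in (\ref{eqn1212}) directly from the definition (\ref{eqneqn9}) of twisted convolution. Substituting $a(\tau',\nu')=\delta(\tau'-nT')\delta(\nu'-mB')$, the integral collapses to $w_{0,0,tx}(\tau-nT',\nu-mB')\,e^{j2\pi mB'(\tau-nT')}$.

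\textbf{Step 2.} I then use associativity of twisted convolution, which follows by a routine change of variables in (\ref{eqneqn9}). With it, (\ref{eqn8}) becomes
\[
x_{n,m,dd}^{w}(\tau,\nu)=\delta(\tau-nT')\delta(\nu-mB') *_{\sigma} \left( w_{0,0,tx} *_{\sigma} x_{n,m,dd}\right)(\tau,\nu).
\]
The inner term $y(\tau,\nu) \Define w_{0,0,tx} *_{\sigma} x_{n,m,dd}$ is exactly what frame $(0,0)$ would transmit if it carried the symbols $x_{n,m}[k,l]$. Applying Step 1 to the outer convolution gives
\[
x_{n,m,dd}^{w}(\tau,\nu)=e^{j2\pi mB'(\tau-nT')}\,y(\tau-nT',\nu-mB').
\]

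\textbf{Step 3.} Next I apply the inverse Zak transform (\ref{eqn10}). This gives
\[
s_{n,m}(t)=\sqrt{\tau_p}\,e^{j2\pi mB'(t-nT')}\int_0^{\nu_p} y(t-nT',\nu-mB')\,d\nu.
\]
Since $y$ is quasi-periodic in $\nu$ with period $\nu_p$ in the strict sense (no phase factor along the Doppler axis), the integral over any interval of length $\nu_p$ is the same. I can therefore shift the integration variable by $mB'$ and obtain
\[
s_{n,m}(t)=e^{j2\pi mB'(t-nT')}\,\tilde s(t-nT'),
\]
where $\tilde s={\mathcal Z}_t^{-1}(y)$ is a frame-$(0,0)$ signal.

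\textbf{Step 4.} If $\tilde s$ is approximately confined to $|t|<T'/2$ and $|f|<B'/2$, then time-shifting by $nT'$ and modulating by $e^{j2\pi mB't}$ moves its time support to $|t-nT'|<T'/2$ and its spectrum to $|f-mB'|<B'/2$. The constant phase $e^{-j2\pi mB'nT'}$ does not affect localization. This gives localization to ${\mathcal I}_{n,m}$.

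The main obstacle is Step 3: the shift in $\nu$ must not break the quasi-periodicity bookkeeping. I need to check that $y$ retains the quasi-periodicity of $x_{n,m,dd}$, since twisted convolution with a filter preserves quasi-periodicity. I also need to check that the Doppler-axis quasi-periodicity carries no phase, so that the integration window can be shifted freely.

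Localization of $\tilde s$ itself I take as the defining property of the frame-$(0,0)$ filter $w_{0,0,tx}$. If it needs justification, I would invoke the standard fact that $w_{0,0,tx}$ of delay spread about $1/B'$ and Doppler spread about $1/T'$ yields a pulsone train truncated to duration about $T'$ and bandwidth about $B'$.
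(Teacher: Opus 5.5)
Your proposal is correct and follows essentially the same route as the paper's proof: you write $x^{w}_{n,m,dd}$ as $\delta(\tau-nT')\delta(\nu-mB') *_{\sigma}$ applied to the frame-$(0,0)$ filtered signal $w_{0,0,tx} *_{\sigma} x_{n,m,dd}$, evaluate this as a shift and modulation, and use Doppler-axis periodicity to move the inverse-Zak integration window, obtaining $s_{n,m}(t)=e^{j2\pi mB'(t-nT')}\,\tilde s(t-nT')$ with $\tilde s$ localized to ${\mathcal I}_{0,0}$ by the design of $w_{0,0,tx}$. Your explicit checks of the second equality in the statement, of associativity of twisted convolution, and of the preservation of quasi-periodicity make precise steps that the paper uses only implicitly.
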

\begin{IEEEproof}
See Appendix \ref{appen_thm1}.

\end{IEEEproof}
The total signal transmitted for all frames is given by
\begin{eqnarray}
\label{eqneqn16}
    s(t) & = & \sum\limits_{n,m \in {\mathbb Z}} s_{n,m}(t).
\end{eqnarray}

Next, we discuss receiver signal processing for the $(n,m)$-th frame.
At the receiver, the Zak-transform of the received signal $r(t)$ gives its DD domain realization \cite{Zak67, Janssen88,DerivOTFS}
\begin{eqnarray}
\label{eqn11}
    y_{\mbox{\scriptsize{dd}}}(\tau , \nu ) & = & {\mathcal Z}_t \left(  r(t)\right) \nonumber \\
    & & \hspace{-10mm} = \sqrt{\tau_p} \sum\limits_{n \in {\mathbb Z}} r(\tau + n \tau_p) \, e^{-j 2 \pi \nu n \tau_p}.
\end{eqnarray}
This is followed by matched filtering. For an arbitrary transmit pulse shaping filter $w_{tx}(\tau, \nu)$ the corresponding matched filter at the receiver is given by $w_{rx}(\tau, \nu) = w_{tx}^*(-\tau, -\nu) \, e^{j 2 \pi \nu \tau}$ \cite{Hanly24}. Therefore the
matched filter for the $(n,m)$-th frame is
\begin{eqnarray}
\label{eqne1818}
    w_{n,m,rx}(\tau, \nu) & \Define & w_{n,m,tx}^*(-\tau, -\nu) \, e^{j2\pi\tau\nu}.
\end{eqnarray}The output of the matched filter is given by
\begin{eqnarray}
\label{eqnref19}
y_{n,m,dd}^{w}(\tau , \nu ) & = & w_{n,m,rx}(\tau, \nu) \, *_{\sigma} \, y_{\mbox{\scriptsize{dd}}}(\tau , \nu ).
\end{eqnarray}Sampling $y_{n,m,dd}^{w}(\tau , \nu ) $ on the information lattice $\Lambda$ gives the received discrete DD domain signal
\begin{eqnarray}
y_{n,m,dd}[k,l] & = & y_{n,m,dd}^{w}\left( \tau = \frac{k}{B} , \nu = \frac{l}{T} \right),
\end{eqnarray}$k,l \in {\mathbb Z}.$
Fig.~\ref{figmultiframezak} shows the transceiver signal processing in multi-frame Zak-OTFS.
    \begin{figure}
     \centering
\includegraphics[width=9.0cm,height=3.6cm]{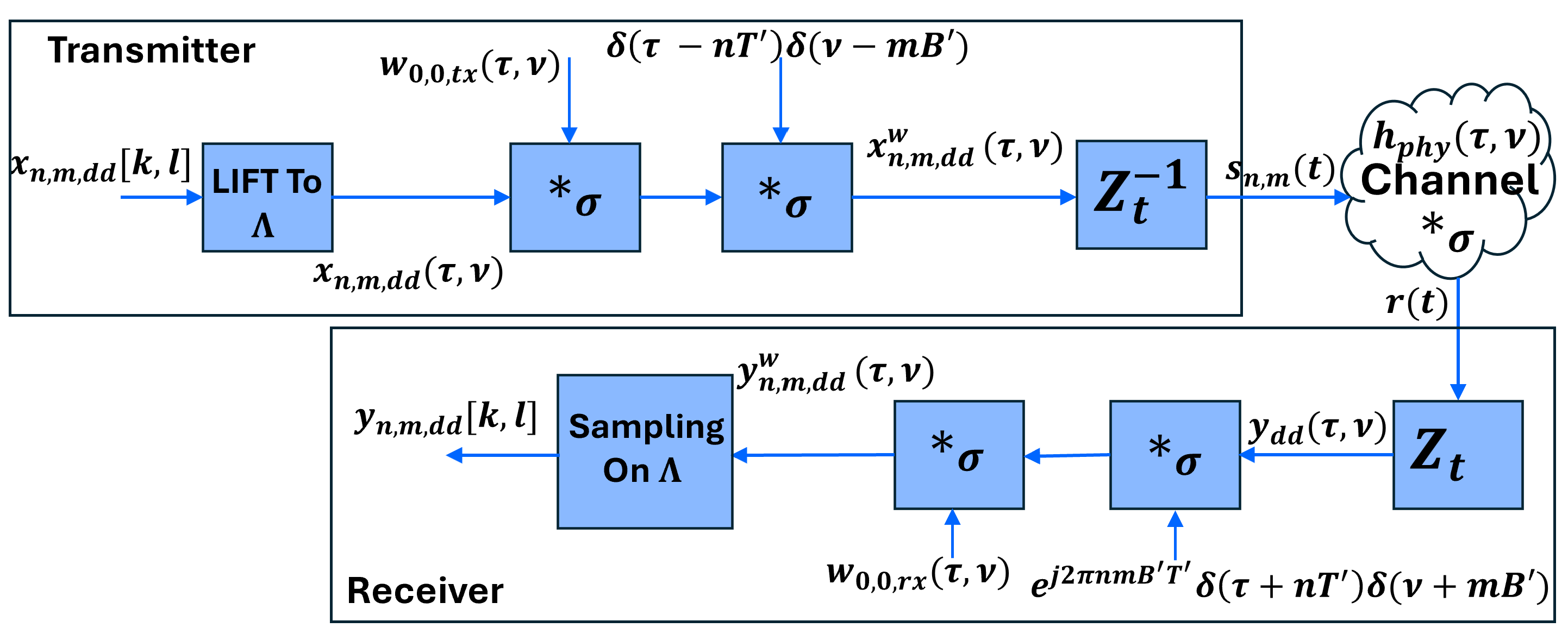}
        \hspace{-6mm}
        \caption{Transceiver signal processing in multi-frame Zak-OTFS.}
        \label{figmultiframezak}
        \vspace{-3mm}
    \end{figure}

\section{I/O relation of multi-frame Zak-OTFS modulation}
\label{seciorel}
The following theorem gives the input-output (I/O) relation for the $(n,m)$-th frame.
\begin{theorem}
\label{thm2}
The I/O relation for the $(n,m)$-th frame is given by
\begin{eqnarray}
\label{thm2eqn1}
y_{n,m,dd}[k,l] & = & h_{n,m}[k,l] *_{\sigma} x_{n,m,dd}[k,l] \nonumber \\
& & \hspace{-7mm} + \underbrace{\hspace{-4mm}\sum\limits_{(p,q) \ne (n,m)} \hspace{-4mm} h_{n,m,p,q}[k,l] *_{\sigma}
 x_{p,q,dd}[k,l]}_{\text{Inter Frame Interference (IFI)}} \nonumber \\
 & & + z_{n,m,dd}[k,l],
\end{eqnarray}where twisted convolution ($*_{\sigma}$) between two discrete DD domain functions $a[k,l]$ and $b[k,l]$ is given by
\begin{eqnarray}
    c[k,l] & = & a[k,l] *_{\sigma} b[k,l] \nonumber \\
    & & \hspace{-13mm} = \sum\limits_{k', l' \in {\mathbb Z}}  a[k',l'] \, b[k - k', l - l'] \, e^{j 2 \pi \frac{l' (k - k')}{MN}}.
\end{eqnarray}
In (\ref{thm2eqn1}), the effective DD domain filter $h_{n,m}[k,l]$ for the $(n,m)$-th frame is given by
\begin{eqnarray}
\label{thm2eqn3}
    h_{n,m}[k,l] & = & h_{n,m,n,m}\left( \tau = \frac{k}{B} \,,\, \nu = \frac{l}{T} \right), \nonumber \\
     h_{n,m,p,q}[k,l] & = & h_{n,m,p,q}\left( \tau = \frac{k}{B} \,,\, \nu = \frac{l}{T} \right),
\end{eqnarray}where
\begin{eqnarray}
\label{eqneqn24}
    h_{n,m,p,q}(\tau, \nu) & \Define & w_{n,m,rx}(\tau,\nu) *_{\sigma} h_{\text{phy}}(\tau, \nu) *_{\sigma} w_{p,q,tx}(\tau,\nu). \nonumber \\
\end{eqnarray}Also, $z_{n,m,dd}[k,l]$ are the DD domain AWGN samples obtained after sampling the filtered AWGN signal $w_{n,m,rx}(\tau, \nu) *_{\sigma} z_{dd}(\tau, \nu)$ on $\Lambda$ (here $z_{dd}(\tau,\nu)$ is the DD domain representation of AWGN $z(t)$).
\end{theorem}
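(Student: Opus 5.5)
The plan is to push the DD domain description through each block of Fig.~\ref{figmultiframezak}, using linearity and the associativity of twisted convolution. First, by linearity of the inverse Zak transform, the channel and the Zak transform, the received DD signal $y_{\mbox{\scriptsize{dd}}}(\tau,\nu)$ is the superposition of the contributions of all frames $s_{p,q}(t)$ in (\ref{eqneqn16}) plus the noise term ${\mathcal Z}_t(z(t)) = z_{dd}(\tau,\nu)$.

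The central step is the standard Zak-domain fact that the channel (\ref{eqn2}) acts on a quasi-periodic DD signal by twisted convolution. Concretely, if $s(t)={\mathcal Z}_t^{-1}(a(\tau,\nu))$ for a quasi-periodic $a$, then
\begin{eqnarray}
{\mathcal Z}_t\left( \iint h_{\mbox{\scriptsize{phy}}}(\tau',\nu') s(t-\tau') e^{j2\pi\nu'(t-\tau')} d\tau' d\nu' \right) = h_{\mbox{\scriptsize{phy}}}(\tau,\nu) *_{\sigma} a(\tau,\nu). \nonumber
\end{eqnarray}
I would verify this for a single path $\delta(\tau-\tau_i)\delta(\nu-\nu_i)$ and then extend by linearity. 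For a single path, the Zak transform (\ref{eqn11}) of $s(t-\tau_i)e^{j2\pi\nu_i(t-\tau_i)}$ equals $e^{j2\pi\nu_i(\tau-\tau_i)}a(\tau-\tau_i,\nu-\nu_i)$, using ${\mathcal Z}_t{\mathcal Z}_t^{-1}=\mathrm{id}$ on quasi-periodic functions. This is exactly the twisted convolution (\ref{eqneqn9}) with a point mass.

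Applying this with $a = x^{w}_{p,q,dd} = w_{p,q,tx} *_{\sigma} x_{p,q,dd}$ from (\ref{eqn8}) and then the matched filter (\ref{eqnref19}) gives
\begin{eqnarray}
y^{w}_{n,m,dd} = \sum_{p,q} w_{n,m,rx} *_{\sigma} h_{\mbox{\scriptsize{phy}}} *_{\sigma} w_{p,q,tx} *_{\sigma} x_{p,q,dd} + w_{n,m,rx} *_{\sigma} z_{dd}. \nonumber
\end{eqnarray}
By associativity of $*_{\sigma}$, each summand is $h_{n,m,p,q}(\tau,\nu) *_{\sigma} x_{p,q,dd}(\tau,\nu)$, with $h_{n,m,p,q}$ as in (\ref{eqneqn24}). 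Associativity is checked directly from (\ref{eqneqn9}): the phase factors combine identically under either bracketing.

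Next I sample on $\Lambda$. The signal $x_{p,q,dd}(\tau,\nu)$ is a sum of point masses at $(k'/B,\,l'/T)$, since $\tau_p/M = 1/B$ and $\nu_p/N = 1/T$. The continuous twisted convolution therefore becomes the sum
\begin{eqnarray}
\sum_{k',l'} x_{p,q,dd}[k',l']\, h_{n,m,p,q}\!\left(\tfrac{k-k'}{B},\tfrac{l-l'}{T}\right) e^{j2\pi \frac{(l-l')}{T}\frac{k'}{B}}, \nonumber
\end{eqnarray}
evaluated at $(k/B,\,l/T)$. Substituting $k'' = k-k'$ and $l'' = l-l'$, and using $BT = MN$, yields the discrete twisted convolution with phase $e^{j2\pi l''(k-k'')/(MN)}$. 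Separating the $(p,q)=(n,m)$ term from the rest gives the desired term plus the IFI, and the noise term is as stated.

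I expect the main obstacle to be justifying the Zak-domain channel action rigorously. This requires care with the quasi-periodicity phase $e^{j2\pi n'\nu\tau_p}$ when exchanging the shift by $\tau_i$ with the sum over $n$ in (\ref{eqn11}), and with convergence and interchange of the infinite sums over frames. I would handle this by working term by term, relying on the quasi-periodicity of every $x^{w}_{p,q,dd}$, which is preserved by twisted convolution with any filter.
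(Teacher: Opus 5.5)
Your proposal is correct and follows essentially the same route as the paper's proof. Both superpose the frames, use the Zak-domain shift/modulation identity to turn the channel into twisted convolution with $h_{\text{phy}}$, apply the matched filter and group $w_{n,m,rx} *_{\sigma} h_{\text{phy}} *_{\sigma} w_{p,q,tx}$ into $h_{n,m,p,q}$, and then sample on $\Lambda$. The only difference is that you derive the lattice-sampling step explicitly via the substitution $k''=k-k'$, $l''=l-l'$ and $BT=MN$, whereas the paper simply cites that result from the Zak-OTFS literature.
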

\begin{IEEEproof}
See Appendix \ref{appen_thm2}.
\end{IEEEproof}
The following theorem expresses the effective DD channel filter $h_{n,m}[k,l]$ in terms of the parameters of the physical channel $h_{\text{phy}}(\tau, \nu)$.
\begin{theorem}
\label{thm3}
The effective discrete DD domain channel filter $h_{n,m}[k,l]$ can be decomposed as
\begin{eqnarray}
\label{eqnthm3}
    h_{n,m}[k,l] & = & \sum\limits_{i=1}^P \alpha_i^n \, \beta_i^m \, A_i[k,l], \nonumber \\
    \alpha_i & \Define & e^{j2\pi \nu_i T'} \,,\, \beta_i  \Define e^{-j2\pi \tau_i B'}, \nonumber \\
    A_i[k,l] & \Define & A_i\left(\tau = \frac{k}{B} \,,\, \nu = \frac{l}{T} \right), \nonumber \\
    A_i(\tau, \nu) & \Define & h_i {\Big (} w_{0,0,rx}(\tau, \nu) *_{\sigma} (\delta(\tau - \tau_i) \delta(\nu - \nu_i)) \nonumber \\
    & & \hspace{4mm} *_{\sigma} w_{0,0,tx}(\tau, \nu) {\Big )}. \nonumber \\
\end{eqnarray}
\end{theorem}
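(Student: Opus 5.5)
The plan is to work entirely with the continuous DD filter $h_{n,m,n,m}(\tau,\nu)$ of (\ref{eqneqn24}) and then sample it on $\Lambda$ as in (\ref{thm2eqn3}). Three facts about twisted convolution will be used: it is associative, it is bilinear, and twisted convolution of two DD impulses is again an impulse with a phase. Writing $\delta_{(a,b)}(\tau,\nu) \Define \delta(\tau-a)\delta(\nu-b)$, the definition (\ref{eqneqn9}) gives directly
\begin{eqnarray*}
\delta_{(a_1,b_1)} *_{\sigma} \delta_{(a_2,b_2)} & = & e^{j2\pi b_1 a_2}\, \delta_{(a_1+a_2,\,b_1+b_2)}.
\end{eqnarray*}
Associativity I would check once by substituting (\ref{eqneqn9}) twice and verifying that the phases $e^{j2\pi\nu'(\tau-\tau')}$ compose consistently; this is the standard Heisenberg-group structure.

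First, Theorem \ref{thm1} already gives $w_{n,m,tx} = \delta_{(nT',mB')} *_{\sigma} w_{0,0,tx}$. Next I need the analogous factorization of the receive filter (\ref{eqne1818}). The key lemma is that the matched-filter map $w \mapsto w^*(-\tau,-\nu)e^{j2\pi\tau\nu}$ is an anti-homomorphism: $(a *_{\sigma} b)_{rx} = b_{rx} *_{\sigma} a_{rx}$. I would verify this by changing variables $\tau' \to \tau-\tau'$, $\nu'\to\nu-\nu'$ in the integral and collecting the phases. For a single impulse, $(\delta_{(a,b)})_{rx} = e^{j2\pi ab}\,\delta_{(-a,-b)}$. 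Together these yield
\begin{eqnarray*}
w_{n,m,rx} & = & e^{j2\pi nmT'B'}\, w_{0,0,rx} *_{\sigma} \delta_{(-nT',-mB')}.
\end{eqnarray*}

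Substituting both factorizations and $h_{\text{phy}} = \sum_i h_i \delta_{(\tau_i,\nu_i)}$ into (\ref{eqneqn24}), bilinearity and associativity reduce the problem to the impulse triple product $\delta_{(-nT',-mB')} *_{\sigma} \delta_{(\tau_i,\nu_i)} *_{\sigma} \delta_{(nT',mB')}$. Applying the impulse rule twice gives $e^{-j2\pi mB'\tau_i}\, e^{j2\pi(\nu_i - mB')nT'}\,\delta_{(\tau_i,\nu_i)}$. Multiplying by the prefactor $e^{j2\pi nmT'B'}$ cancels the cross term, leaving $\alpha_i^n\beta_i^m\,\delta_{(\tau_i,\nu_i)}$. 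Hence
\begin{eqnarray*}
h_{n,m,n,m}(\tau,\nu) & = & \sum_i \alpha_i^n\beta_i^m A_i(\tau,\nu),
\end{eqnarray*}
and sampling at $(k/B,l/T)$ gives (\ref{eqnthm3}).

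The main obstacle is the anti-homomorphism property of the matched filter together with its exact phase. A wrong sign or a missing $e^{j2\pi ab}$ there would leave a spurious factor $e^{\pm j2\pi nmT'B'}$, and the clean separable form $\alpha_i^n\beta_i^m$ would fail. I would therefore verify that property carefully from (\ref{eqneqn9}) before doing the bookkeeping.
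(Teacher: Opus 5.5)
Your proposal is correct and follows essentially the paper's route. You factor $w_{n,m,tx}$ via Theorem~\ref{thm1}, and you factor $w_{n,m,rx}$ using the anti-homomorphism of the matched-filter map, which the paper cites from \cite{zakotfs3} rather than re-deriving. You then reduce to the impulse triple product, where the $e^{j2\pi nmB'T'}$ cross term cancels to leave $\alpha_i^n\beta_i^m$, and finally sample on $\Lambda$.

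The only minor difference is that the paper first computes $h_{n,m,p,q,\text{phy}}$ for general $(p,q)$ and then specializes to $(p,q)=(n,m)$, whereas you go to the diagonal case directly.
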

\begin{IEEEproof}
See Appendix \ref{appen_thm3}.
\end{IEEEproof}
An important observation from (\ref{eqnthm3}) is that $A_i[k,l]$ does not depend on $(n,m)$. For the $(n,m) = (0,0)$-th frame, from (\ref{eqnthm3}) it follows that
\begin{eqnarray}
    h_{0,0}[k,l] & = & \sum\limits_{i=1}^P A_i[k,l].
\end{eqnarray}In other words, $A_i[k,l]$ is the contribution of the $i$-th channel path to the DD domain effective channel filter for the $(n,m) = (0,0)$-th frame. From the expression for $h_{n,m}[k,l]$ in (\ref{eqnthm3}) it is also clear that as move forward in time to the next frame (time advancement by $T'$) the $i$-th path introduces the phase factor $\alpha_i = e^{j 2 \pi \nu_i T'}$ due to the Doppler shift $\nu_i$. Similarly, when we move forward to the next frame in frequency (i.e. moving by $B'$), the $i$-th path introduces the phase factor $\beta_i = e^{-j 2\pi \tau_i B'}$. We exploit this behavior to propose a method which uses several training frames in time and in frequency for which the effective DD domain channel filter is known, to predict the effective channel filter for any $(n,m)$-th frame.

\section{Proposed Prediction of the effective DD domain channel filter}
\label{propmethod}
The methodology/philosophy of the proposed method is as follows.
\begin{itemize}
    \item \textbf{Step} ${\bf 1}$ - Acquire channel estimates in training frames: The input to the prediction method are the estimates of the DD domain effective channel filter acquired in $K$ ``training frames" with $(n,m)$ indices in the set ${\mathcal F} = \{ (n_1, m_1), \cdots, (n_K, m_K) \}$. Also, let ${\mathcal S} = \{ (k_1,l_1), \cdots, (k_{N_t}, l_{N_t}) \}$ denote a set of $N_t$ DD taps $(k,l)$ where $h_{n,m}[k,l]$ has significant energy. We assume that the physical channel $h_{\text{phy}}(\tau, \nu)$ is stationary and does not vary with time.
    \item \textbf{Step} ${\bf 2}$ - Estimate channel parameters based on channel estimates acquired in training frames: Based on the channel estimates $\widehat{h}_{n,m}[k,l]$, $(n,m) \in {\mathcal F}$, $(k,l) \in {\mathcal S}$ from the training frames, we estimate the parameters $P$, $\alpha_i, \beta_i$, $i=1,2,\cdots, P$, $A_i[k,l]$, $i=1,2,\cdots,P$, $(k,l) \in {\mathcal S}$.
    \item \textbf{Step} ${\bf 3}$ - Use estimated channel parameters $\widehat{P}$, $\widehat{\alpha}_i, \widehat{\beta}_i$, $\widehat{A}_i[k,l]$, to predict the effective DD domain channel filter in any arbitrary $(n,m)$-th frame, i.e.
    \begin{eqnarray}
        \label{est926649}
        \widehat{h}_{n,m}[k,l] & = & \sum\limits_{i=1}^{\widehat{P}} \left(\widehat{\alpha}_i\right)^n  \, \left(\widehat{\beta}_i\right)^m  \, \widehat{A}_i[k,l],
    \end{eqnarray}for all $(k,l) \in {\mathcal S}$.
\end{itemize}

In this work, we consider a particular set ${\mathcal F}$ of training frames consisting of the current frame $(n,m) = (0,0)$ and the past $Q$ frames in time, i.e., $\{ (-Q,0), (-(Q-1), 0), \cdots, (-1,0)\}$, and $Q$ frames in frequency, i.e., $\{ (0,-Q), (0, -(Q-1)), \cdots, (0,-1)\}$,
\begin{eqnarray}
\label{eqnframes}
    {\mathcal F} & = & \{ (-Q,0), (-(Q-1), 0), \cdots, (-1,0), (0,0), \nonumber \\
    & & \hspace{4mm} (0,-Q), (0, -(Q-1)), \cdots, (0, -1) \}.
\end{eqnarray}In the following we firstly  enumerate the assumptions and their validity
in Section \ref{subsecassump} followed by a detailed description of the proposed prediction method in Section \ref{subsecpredict} for the training frames described above in (\ref{eqnframes}). We would like to mention here that the described method is not limited to the training frames in (\ref{eqnframes}) but is applicable to many other possible set of training frames. In frames where we predict the effective DD domain channel filter, all $MN$ pulsones/DD carriers carry data symbols (as they need not carry pilot) and are called as prediction frames. Fig.~\ref{figmultiframe}
illustrates training and prediction frames for the choice of ${\mathcal F}$
in (\ref{eqnframes}). Training frames carry both pilot and data and the
estimation of the effective discrete DD domain channel filter from the received pilot response in these frames is discussed in Section \ref{subsectrainest}.
    \begin{figure}
     \centering
\includegraphics[width=9.0cm,height=5.6cm]{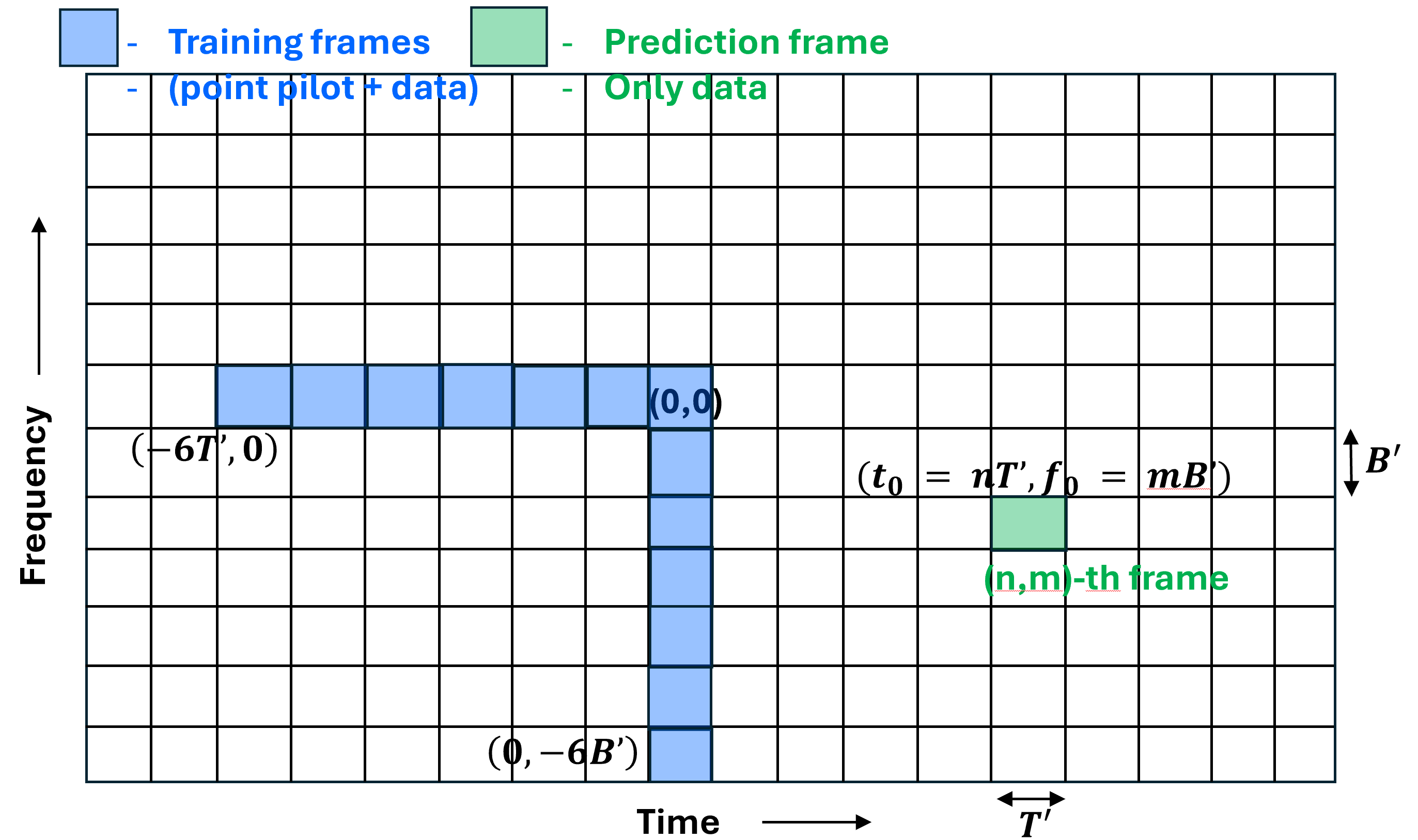}
        \hspace{-6mm}
        \caption{Multi-frame Zak-OTFS and inter-frame prediction.}
        \label{figmultiframe}
        \vspace{-3mm}
    \end{figure}

\subsection{Assumptions and their validity}
\label{subsecassump}
In this section we enumerate the practical assumptions for the proposed
prediction method.
Let ${\bf A} \in {\mathbb C}^{N_t \times P}$ be the matrix whose $i$-th column is $A_i[k,l], (k,l) \in {\mathcal S}$.
\begin{eqnarray}
\label{eqnadef}
    {\bf A} &\hspace{-3.5mm} \Define & \hspace{-3.5mm} \begin{bmatrix}
        A_1[k_1, l_1] & A_2[k_1, l_1] & \dots & A_P[k_1, l_1] \\
        A_1[k_2, l_2] & A_2[k_2, l_2] & \dots & A_P[k_2, l_2] \\
        \vdots & \vdots & \vdots & \vdots  \\
         A_1[k_{N_t}, l_{N_t}] & A_2[k_{N_t}, l_{N_t}] & \dots & A_P[k_{N_t}, l_{N_t}].
    \end{bmatrix}
\end{eqnarray}We refer to ${\bf A}$ as the DD signature matrix, since its columns describe the delay Doppler profile of each channel path.
With sufficiently high $B$ and $T$ so that the channel paths are almost resolvable in the DD domain (i.e., each DD bin receives most if its energy from a single or at most a few channel paths), and fractional channel delay and Doppler shifts\footnote{\footnotesize{Channel delay and Doppler shifts are said to be fractional if they are non-integer multiples of $1/B$ and $1/T$ respectively.}}, it has been observed that the number of significant DD taps is greater than the number of channel paths $P$
\begin{eqnarray}
\label{eqnntp}
    N_t & \geq & P.
\end{eqnarray}A possible way of choosing $N_t$ and ${\mathcal S}$ is discussed in Section \ref{subsecnt}. Through exhaustive simulations we have also observed that if the channel paths are distinct, i.e. for all $i \ne i'$, $i, i' = 1,2,\cdots, P$
\begin{eqnarray}
\label{eqndistinct}
    (\tau_i, \nu_i) & \ne & (\tau_{i'}, \nu_{i'}),
\end{eqnarray}then the matrix ${\bf A}$ is full rank, i.e.
\begin{eqnarray}
\label{eqnrank}
    \text{Rank}({\bf A}) & = & P \,,\, \text{if} \,\, (\ref{eqnntp}) \,\, \text{and} \,\, (\ref{eqndistinct}).
\end{eqnarray}Since the channel path delay and Doppler shifts depend on the spatial geometry of reflectors and is not related to the bandwidth $B'$ and time duration $T'$, we assume that the difference between any two path delay shifts is not an integer multiple of $1/B'$ and also that the difference between any two Doppler shifts is not an integer multiple of $1/T'$, i.e. for all $i \ne i'$, $i, i' = 1,2,\cdots, P$
\begin{eqnarray}
    B' (\tau_i - \tau_{i'}) & \notin & {\mathbb Z} \,,\, T' (\nu_i - \nu_{i'}) \, \notin \, {\mathbb Z}.
\end{eqnarray}This then implies that for all $i \ne i'$, $i, i' = 1,2,\cdots, P$
\begin{eqnarray}
    \label{eqndistinctangles}
 \alpha_i & \ne & \alpha_{i'} \,,\, \beta_{i} \ne \beta_{i'}.   
\end{eqnarray}
We summarize the assumptions in the following
\begin{itemize}
\item {\textbf{Assumption} ${\bf 1}$}: We consider prediction over a time span and frequency span over which there is almost no variation in the DD spreading function $h_{\text{phy}}(\tau, \nu)$ of the physical channel. For a practical discussion please refer to Section \ref{subsecstationarity}.
\item \textbf{Assumption} ${\bf 2}$: We assume that $B$ and $T$ are sufficiently high and the channel path delay and Doppler shifts are fractional, so that $N_t \geq P$ (see (\ref{eqnntp})).
\item \textbf{Assumption} ${\bf 3}$: The channel paths are distinct so that ${\bf A}$ has rank $P$ (see (\ref{eqnrank})).
\item \textbf{Assumption} ${\bf 4}$: No two path delay shifts are equivalent modulo $1/B'$ and no path Doppler shifts are equivalent modulo $1/T'$, i.e., the parameters $\alpha_i$, $i=1,2,\cdots, P$ are all distinct and similarly $\beta_i, i=1,2,\cdots, P$ are all distinct.
\end{itemize}

\subsection{Proposed prediction method}
\label{subsecpredict}
Consider the matrix ${\bf H}_{t,q_1, q_2} \in {\mathbb C}^{N_t \times (q_1 - q_2 + 1)}$ given by (\ref{eqn92774}) (see top of next page), with $q_1 \geq q_2 \geq 0$. The columns of this matrix are simply the effective DD domain channel filter in the $(n,0)$-th frame, $n=-q_1, \cdots, -q_2$, i.e., frames in the past but same frequency as the $(0,0)$-th frame. The following result gives a useful factorization of ${\bf H}_{t,q_1, q_2}$.
\begin{figure*}
\vspace{-8mm}
\begin{eqnarray}
\label{eqn92774}
       {\bf H}_{t,q_1, q_2} & \Define \begin{bmatrix}
        h_{-q_1, 0}[k_1,l_1] & \dots & h_{-(q_2+1), 0}[k_1,l_1] & h_{-q_2, 0}[k_1,l_1] \\
        h_{-q_1,0}[k_2,l_2] & \dots & h_{ -(q_2 + 1),0}[k_2, l_2]  &
        h_{-q_2,0}[k_2, l_2] \\
        \vdots & \vdots & \vdots & \vdots \\
        h_{-q_1,0}[k_{N_t},l_{N_t}] & \dots & h_{-(q_2 + 1), 0}[k_{N_t},l_{N_t)}] & h_{-q_2, 0}[k_{N_t},l_{N_t}] \\
    \end{bmatrix}
\end{eqnarray}
\vspace{-4mm}
\begin{eqnarray*}
\hline
\end{eqnarray*}
\end{figure*}
\begin{lemma}
\label{lem1}
For any $q_1 \geq q_2 \geq 0$, the matrix ${\bf H}_{t,q_1, q_2}$ of effective DD domain channel filters for the past frames $(-q_1,0), \cdots, (-q_2,0)$ can be factorized as 
    \begin{eqnarray}
\label{eqn38}
    {\bf H}_{t,q_1, q_2} & = & {\bf A} \, {\bf \Phi}_{q_1, q_2},
    \end{eqnarray}where the time-phase matrix ${\bf \Phi}_{q_1, q_2}$ is of Vandermonde-type and is given by
  \begin{eqnarray}
\label{eqn37}
    {\bf \Phi}_{q_1, q_2} & \Define 
    \begin{bmatrix}
         \alpha_1^{-q_1} &  \dots &  \alpha_1^{-(q_2+1)} &  \alpha_1^{- q_2} \\
          \alpha_2^{-q_1} &  \dots &  \alpha_2^{-(q_2+1)} &  \alpha_2^{- q_2}\\
        \vdots & \vdots & \vdots \\
         \alpha_P^{-q_1} &  \dots &  \alpha_P^{-(q_2+1)} &  \alpha_P^{- q_2} \\
    \end{bmatrix}.
\end{eqnarray}  Under assumptions $3$ and $4$ and with $N_t \geq (q_1 - q_2 + 1) \geq P$
\begin{eqnarray}
\label{rankcondition}
    \textbf{Rank}({\bf H}_{t,q_1, q_2}) & = & P, \,\, \nonumber \\
    & & \hspace{-30mm} \text{given} \,\, \text{Assump.} \, (3), (4) \, \text{and} \, N_t \geq (q_1 - q_2 + 1) \geq P.
\end{eqnarray}
However, if $N_t \geq P > (q_1 - q_2+1)$, then the rank of ${\bf H}_{t,q_1,q_2}$ is at most $(q_1 - q_2 + 1)$ which is less than $P$, i.e.
\begin{eqnarray}
\label{rankcondition2}
    \textbf{Rank}({\bf H}_{t,q_1, q_2}) & < & P, \,\, \nonumber \\
    & & \hspace{-32mm} \text{if} \,\, \text{Assump.} \, (3), (4) \, \text{and} \, N_t \geq P > (q_1 - q_2 + 1).
\end{eqnarray}
\end{lemma}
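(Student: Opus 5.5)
The plan has three parts: the factorization, the full-rank case, and the rank-deficient case.

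\emph{Factorization.} This follows directly from Theorem~\ref{thm3}. Setting $m=0$ in (\ref{eqnthm3}) gives, for every $(k,l)\in{\mathcal S}$ and every $n$,
\[
h_{n,0}[k,l]=\sum_{i=1}^P \alpha_i^{n} A_i[k,l].
\]
The column of ${\bf H}_{t,q_1,q_2}$ indexed by frame $(-q,0)$, with $q_2\le q\le q_1$, is therefore $\sum_{i} \alpha_i^{-q}$ times the $i$-th column of ${\bf A}$ in (\ref{eqnadef}). This is exactly ${\bf A}$ multiplied by the corresponding column $(\alpha_1^{-q},\dots,\alpha_P^{-q})^T$ of ${\bf \Phi}_{q_1,q_2}$. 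Stacking these columns for $q=q_1,\dots,q_2$ yields (\ref{eqn38}).

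\emph{Full-rank case.} Assume $N_t\ge L\Define q_1-q_2+1\ge P$.

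First, factor the Vandermonde structure as
\[
{\bf \Phi}_{q_1,q_2}={\bf D}\,{\bf V},
\]
where ${\bf D}=\mathrm{diag}(\alpha_1^{-q_1},\dots,\alpha_P^{-q_1})$ and ${\bf V}$ is $P\times L$ with entries ${\bf V}_{i,j}=\alpha_i^{j}$, $j=0,\dots,L-1$. Since $|\alpha_i|=1$, ${\bf D}$ is invertible. Under Assumption 4 the $\alpha_i$ are distinct, so the leading $P\times P$ block of ${\bf V}$ is a square Vandermonde matrix with nonzero determinant $\prod_{i<i'}(\alpha_{i'}-\alpha_i)$. Hence ${\bf \Phi}_{q_1,q_2}$ has full row rank $P$.

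Second, by Assumption 3, ${\bf A}$ has full column rank $P$. The rank of a product of a full column rank matrix and a full row rank matrix equals the inner dimension. Concretely, ${\bf A}{\bf x}=0$ forces ${\bf x}=0$, so $\mathrm{null}({\bf A}{\bf \Phi})=\mathrm{null}({\bf \Phi})$, which has dimension $L-P$. Therefore $\mathrm{Rank}({\bf H}_{t,q_1,q_2})=P$.

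\emph{Rank-deficient case.} If $P>L$, then ${\bf H}_{t,q_1,q_2}$ has only $L$ columns. Its rank is thus at most $L<P$, giving (\ref{rankcondition2}); alternatively, use $\mathrm{Rank}({\bf A}{\bf \Phi})\le\mathrm{Rank}({\bf \Phi})\le L$.

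The only step needing any care is the full-row-rank claim for the rectangular Vandermonde-type ${\bf \Phi}_{q_1,q_2}$, which relies on the negative exponents and the shift $q_2$. Factoring out ${\bf D}$ reduces it to the standard Vandermonde determinant, so no real obstacle is expected.
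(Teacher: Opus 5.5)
Your proposal is correct and follows essentially the same route as the paper's proof. You obtain the factorization from Theorem~\ref{thm3} with $m=0$, write ${\bf \Phi}_{q_1,q_2}$ as an invertible diagonal matrix times a full-row-rank Vandermonde matrix (you pull out $\alpha_i^{-q_1}$ where the paper pulls out $\alpha_i^{-q_2}$), and combine this with the full column rank of ${\bf A}$; your null-space argument for the rank of the product and your treatment of the case $P > q_1-q_2+1$ just spell out steps the paper's appendix leaves implicit.
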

\begin{IEEEproof}
See Appendix \ref{appen_lem1}.
\end{IEEEproof}
Similarly, consider the matrix ${\bf H}_{f,q_1, q_2} \in {\mathbb C}^{N_t \times (q_1 - q_2 + 1)}$ given by (\ref{eqn92779}) (see top of this page), with $q_1 \geq q_2 \geq 0$. The columns of this matrix are simply the effective DD domain channel filter in the $(0,n)$-th frame, $n=-q_1, \cdots, -q_2$, i.e., frames in the same time duration
as the $(0,0)$-th frame, but different frequency. The following result gives a useful factorization of ${\bf H}_{f,q_1, q_2}$.
\begin{figure*}
\vspace{-8mm}
\begin{eqnarray}
\label{eqn92779}
       {\bf H}_{f,q_1, q_2} & \Define \begin{bmatrix}
        h_{0,-q_1}[k_1,l_1] & \dots & h_{0,-(q_2+1)}[k_1,l_1] & h_{0,-q_2}[k_1,l_1] \\
        h_{0,-q_1}[k_2,l_2] & \dots & h_{0, -(q_2 + 1)}[k_2, l_2]  &
        h_{0,-q_2}[k_2, l_2] \\
        \vdots & \vdots & \vdots & \vdots \\
        h_{0,-q_1}[k_{N_t},l_{N_t}] & \dots & h_{0,-(q_2 + 1)}[k_{N_t},l_{N_t)}] & h_{0,-q_2}[k_{N_t},l_{N_t}] \\
    \end{bmatrix}
\end{eqnarray}
\vspace{-4mm}
\begin{eqnarray*}
\hline
\end{eqnarray*}
\end{figure*}
\begin{lemma}
\label{lem2}
For any $q_1 \geq q_2 \geq 0$, the matrix ${\bf H}_{f,q_1, q_2}$ of effective DD domain channel filters for frames $(0,-q_1), \cdots, (0,-q_2)$ can be factorized as
    \begin{eqnarray}
\label{eqn383975}
    {\bf H}_{f,q_1, q_2} & = & {\bf A} \, {\bf \Theta}_{q_1, q_2},
    \end{eqnarray}where the frequency-phase matrix ${\bf \Theta_{q_1, q_2}}$ is given by
  \begin{eqnarray}
\label{eqn37308}
    {\bf \Theta}_{q_1, q_2} & \Define 
    \begin{bmatrix}
         \beta_1^{q_1} &  \dots &  \beta_1^{(q_2+1)} &  \beta_1^{q_2} \\
          \beta_2^{q_1} &  \dots &  \beta_2^{(q_2+1)} &  \beta_2^{ q_2}\\
        \vdots & \vdots & \vdots \\
         \beta_P^{q_1} &  \dots &  \beta_P^{(q_2+1)} &  \beta_P^{q_2} \\
    \end{bmatrix}.
\end{eqnarray}Under assumptions $3$ and $4$ and with $N_t \geq (q_1 - q_2 + 1) \geq P$
\begin{eqnarray}
\label{rankcondition}
    \textbf{Rank}({\bf H}_{f,q_1, q_2}) & = & P, \,\, \nonumber \\
    & & \hspace{-30mm} \text{given} \,\, \text{Assump.} \, (3), (4) \, \text{and} \, N_t \geq (q_1 - q_2 + 1) \geq P.
\end{eqnarray}
\end{lemma}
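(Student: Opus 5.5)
The plan mirrors the argument for Lemma~\ref{lem1}, with the roles of $n$ and $m$ interchanged. First I establish the factorization (\ref{eqn383975}) column by column. Theorem~\ref{thm3} gives $h_{0,m}[k,l] = \sum_{i=1}^P \alpha_i^0 \beta_i^m A_i[k,l] = \sum_{i=1}^P \beta_i^{m} A_i[k,l]$ for any frame $(0,m)$. Take the column of ${\bf H}_{f,q_1,q_2}$ indexed by the frame $(0,-q)$, with $q_2 \le q \le q_1$. Its entries are $h_{0,-q}[k_j,l_j] = \sum_i A_i[k_j,l_j]\,\beta_i^{-q}$, $j=1,\dots,N_t$. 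This column is therefore ${\bf A}$ times the vector $(\beta_1^{-q},\dots,\beta_P^{-q})^T$. Stacking these columns for $q = q_1, q_1-1, \dots, q_2$ gives ${\bf H}_{f,q_1,q_2} = {\bf A}\,{\bf \Theta}_{q_1,q_2}$.

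One point needs care here. The displayed ${\bf \Theta}_{q_1,q_2}$ has entries $\beta_i^{q}$, but the computation produces $\beta_i^{-q}$. I would note that $\beta_i^{-q} = (\beta_i^{*})^{q}$ since $|\beta_i|=1$, so the matrix is still of Vandermonde type, with nodes $\beta_i^{-1}$. Distinctness of the nodes is preserved because $\beta_i \neq \beta_{i'}$ iff $\beta_i^{-1} \neq \beta_{i'}^{-1}$. Equivalently, one can read the stated exponent convention as absorbing the sign. For the rank claim only the Vandermonde structure with distinct nodes matters, so nothing hinges on this.

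For the rank statement, set $L = q_1 - q_2 + 1 \ge P$. Factor out the last column:
\[
{\bf \Theta}_{q_1,q_2} = {\bf D}\,{\bf V},
\]
where ${\bf D} = \mathrm{diag}(\beta_1^{-q_2},\dots,\beta_P^{-q_2})$ is diagonal with unit-modulus entries and hence invertible. Here ${\bf V}$ is the $P\times L$ Vandermonde matrix with rows $(\gamma_i^{L-1},\dots,\gamma_i,1)$ and $\gamma_i = \beta_i^{-1}$. By Assumption~4 the $\gamma_i$ are distinct. The leading $P\times P$ block of ${\bf V}$ (a column-reversed square Vandermonde) then has nonzero determinant $\pm\prod_{i<i'}(\gamma_{i'}-\gamma_i)$, so $\mathrm{Rank}({\bf \Theta}_{q_1,q_2}) = P$.

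By Assumption~3, ${\bf A}\in\mathbb{C}^{N_t\times P}$ has full column rank $P$, which requires $N_t\ge P$ and is implied by the hypothesis. Multiplying on the left by a full-column-rank matrix preserves rank: ${\bf A}{\bf \Theta}x=0$ forces ${\bf \Theta}x=0$. Hence $\mathrm{Rank}({\bf A}{\bf \Theta}_{q_1,q_2}) = \mathrm{Rank}({\bf \Theta}_{q_1,q_2}) = P$, which proves (\ref{rankcondition}) for ${\bf H}_{f,q_1,q_2}$.

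The only step needing any care is the sign/exponent bookkeeping in the definition of ${\bf \Theta}_{q_1,q_2}$ and its reduction to a standard Vandermonde matrix with distinct nodes. Everything else follows directly from Theorem~\ref{thm3} and Assumptions~3 and~4.
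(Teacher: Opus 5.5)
Your proof is correct and follows the paper's own route. The paper proves Lemma~\ref{lem2} by repeating the Lemma~\ref{lem1} argument: read the factorization off Theorem~\ref{thm3}, write the phase matrix as a unit-modulus diagonal matrix times a Vandermonde matrix with distinct nodes, and use the full column rank of ${\bf A}$. You also supply the rank-preservation step that the paper leaves implicit.

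Your sign catch is right, but state it more firmly. With $\beta_i = e^{-j2\pi\tau_i B'}$, Theorem~\ref{thm3} forces the entries $\beta_i^{-q}$, just as Lemma~\ref{lem1} has $\alpha_i^{-q}$. Since ${\bf A}$ has full column rank, the displayed positive-exponent ${\bf \Theta}_{q_1,q_2}$ would give a genuinely different product unless every $\beta_i^{2q}=1$. So this is a sign error in the statement, not a convention that absorbs the sign; it is harmless for the rank claim.

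One small fix: the leading $P\times P$ block of ${\bf V}$ has determinant $\pm\prod_i\gamma_i^{L-P}\prod_{i<i'}(\gamma_{i'}-\gamma_i)$, not $\pm\prod_{i<i'}(\gamma_{i'}-\gamma_i)$. It is still nonzero since $|\gamma_i|=1$, and the trailing block is the plain reversed Vandermonde matrix if you want the simpler formula.
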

\begin{IEEEproof}
Proof is similar to that of Lemma \ref{lem1}.
\end{IEEEproof}
The next result shows that the time-phase matrices and the frequency-phase matrices satisfy rotational invariance.
\begin{lemma}
    \label{lem3}
    For any $Q > 0$, the matrix ${\bf H}_{t,Q-1,0}$ represents the channel time-shifted by one frame relative to ${\bf H}_{t,Q,1}$, due to which the corresponding time-phase matrices
    ${\bf \Phi}_{(Q-1), 0}$ and ${\bf \Phi}_{Q,1}$ satisfy rotational invariance
    \begin{eqnarray}
        {\bf \Phi}_{(Q-1), 0} & = & {\bf \Lambda}_{\alpha} \, {\bf \Phi}_{Q,1}, \nonumber \\
        {\bf \Lambda}_{\alpha} & \Define & \begin{bmatrix}
         \alpha_1 & 0 & \dots & 0 \\
         0 & \alpha_2 & \dots & 0 \\
         \vdots & \vdots & \vdots & \vdots \\
         0 & 0 & \dots & \alpha_P
     \end{bmatrix}.
    \end{eqnarray}Similarly, the frequency-phase matrices ${\bf \Theta}_{(Q-1), 0}$ and ${\bf \Theta}_{Q,1}$ also satisfy rotational invariance
    \begin{eqnarray}
        {\bf \Theta}_{Q, 1} & = & {\bf \Lambda}_{\beta} \, {\bf \Theta}_{Q-1,0}, \nonumber \\
        {\bf \Lambda}_{\beta} & \Define & \begin{bmatrix}
         \beta_1 & 0 & \dots & 0 \\
         0 & \beta_2 & \dots & 0 \\
         \vdots & \vdots & \vdots & \vdots \\
         0 & 0 & \dots & \beta_P
     \end{bmatrix}.
    \end{eqnarray}
\end{lemma}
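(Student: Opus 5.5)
The plan is a direct entrywise comparison of the two sides, using only the definitions of ${\bf \Phi}_{q_1,q_2}$ and ${\bf \Theta}_{q_1,q_2}$ from Lemma \ref{lem1} and Lemma \ref{lem2}. First fix the column indexing. Both ${\bf \Phi}_{Q-1,0}$ and ${\bf \Phi}_{Q,1}$ are $P \times Q$. The $j$-th column ($j=1,\dots,Q$) of ${\bf \Phi}_{q_1,q_2}$ corresponds to frame index $n = -q_1 + (j-1)$, so its $(i,j)$ entry is $\alpha_i^{-q_1+j-1}$. Hence
\begin{eqnarray*}
[{\bf \Phi}_{Q-1,0}]_{i,j} = \alpha_i^{-Q+j}, \qquad [{\bf \Phi}_{Q,1}]_{i,j} = \alpha_i^{-Q+j-1}.
\end{eqnarray*}
Left multiplication by the diagonal matrix ${\bf \Lambda}_{\alpha}$ scales row $i$ by $\alpha_i$. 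This gives $[{\bf \Lambda}_{\alpha}{\bf \Phi}_{Q,1}]_{i,j} = \alpha_i^{-Q+j} = [{\bf \Phi}_{Q-1,0}]_{i,j}$ for all $i,j$, which is the first identity.

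The frequency case works the same way, with the sign of the exponent reversed. The $(i,j)$ entry of ${\bf \Theta}_{q_1,q_2}$ is $\beta_i^{q_1-(j-1)}$. So $[{\bf \Theta}_{Q,1}]_{i,j} = \beta_i^{Q-j+1}$ and $[{\bf \Theta}_{Q-1,0}]_{i,j} = \beta_i^{Q-j}$. Multiplying row $i$ of ${\bf \Theta}_{Q-1,0}$ by $\beta_i$ gives ${\bf \Theta}_{Q,1} = {\bf \Lambda}_{\beta}{\bf \Theta}_{Q-1,0}$. This sign reversal explains why ${\bf \Lambda}_{\beta}$ acts on ${\bf \Theta}_{Q-1,0}$ rather than on ${\bf \Theta}_{Q,1}$: $\beta_i$ enters $h_{0,m}$ through $\beta_i^m$ with $m=-q$, whereas Lemma \ref{lem2} tabulates $\beta_i^{q}$.

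For the interpretive claim that ${\bf H}_{t,Q-1,0}$ is the channel time-shifted by one frame relative to ${\bf H}_{t,Q,1}$, check from (\ref{eqn92774}) that column $j$ of ${\bf H}_{t,Q-1,0}$ is $h_{-Q+j,0}$. Column $j$ of ${\bf H}_{t,Q,1}$ is $h_{-Q+j-1,0}$, i.e.\ the frame immediately before. Combining this with Lemma \ref{lem1} gives ${\bf H}_{t,Q-1,0} = {\bf A}{\bf \Lambda}_{\alpha}{\bf \Phi}_{Q,1}$, and likewise ${\bf H}_{f,Q,1} = {\bf A}{\bf \Lambda}_{\beta}{\bf \Theta}_{Q-1,0}$ by Lemma \ref{lem2}.

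There is no real obstacle; the only care needed is the column indexing and the sign conventions of the $\beta$ exponents. Everything is otherwise an entrywise identity. The result holds for all $Q>0$ and needs none of Assumptions 1--4.
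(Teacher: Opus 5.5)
Your proof is correct and matches the paper's, which simply says both identities follow directly from the definitions (\ref{eqn37}) and (\ref{eqn37308}); your entrywise check, with $[{\bf \Phi}_{q_1,q_2}]_{i,j}=\alpha_i^{-q_1+j-1}$ and $[{\bf \Theta}_{q_1,q_2}]_{i,j}=\beta_i^{q_1-j+1}$, is that argument written out.

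One caveat concerns your side remark, which lies outside the lemma: the mismatch you noticed is more than a sign convention. By Theorem~\ref{thm3}, $h_{0,-q}$ carries $\beta_i^{-q}$, while (\ref{eqn37308}) tabulates $\beta_i^{q}$, so with $\beta_i=e^{-j2\pi\tau_iB'}$ one actually has ${\bf H}_{f,q_1,q_2}={\bf A}\,{\bf \Theta}_{q_1,q_2}^{*}$ (entrywise conjugate), which means Lemma~\ref{lem2} as printed is inconsistent with Theorem~\ref{thm3}. The relation ${\bf H}_{f,Q,1}={\bf A}\,{\bf \Lambda}_{\beta}\,{\bf \Theta}_{Q-1,0}$ that you derive from Lemma~\ref{lem2} should therefore read ${\bf H}_{f,Q,1}={\bf A}\,{\bf \Lambda}_{\beta}^{*}\,{\bf \Theta}_{Q-1,0}^{*}$, though the matrix identity you were asked to prove is unaffected.
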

\begin{IEEEproof}
Follows directly from (\ref{eqn37}) and (\ref{eqn37308}).
\end{IEEEproof}
In the following we discuss the proposed method under assumptions $1,2,3$ and $4$.
We also assume that $N_t \geq Q \geq P$.
From the above lemmas we have, for any $Q > 0$
\begin{eqnarray}
    {\bf H}_{t,Q,1} = {\bf A} \, {\bf \Phi}_{Q,1} & , & {\bf H}_{t,Q-1,0} = {\bf A} \, {\bf \Phi}_{Q-1,0}, \nonumber \\
    {\bf \Phi}_{(Q-1), 0} & = & {\bf \Lambda}_{\alpha} \, {\bf \Phi}_{Q,1}.
\end{eqnarray}Note that the column space of both ${\bf H}_{t,Q,1}$ and ${\bf H}_{t,Q-1,0}$ is same as the column space of ${\bf A}$, and further that ${\bf \Phi}_{Q-1,0}$ and ${\bf \Phi}_{Q,1}$ satisfy rotational invariance. In the following we propose an ESPIRIT (Estimation of Signal Parameters via Rotational Invariance Techniques \cite{espirit_paper}) type method to estimate $\Lambda_{\alpha}$ and ${\bf A}$ from ${\bf H}_{t,Q,1}$ and ${\bf H}_{t,Q-1,0}$. Note that ${\bf H}_{t,Q,1}$ and ${\bf H}_{t,Q-1,0}$ are acquired from the effective DD domain channel filter estimated in the training frames ${\mathcal F}$ (acquisition of channel estimates in the
training frames is briefly reviewed in Section \ref{subsectrainest}).

From Lemma \ref{lem1}, \ref{lem2} and \ref{lem3} it follows that
\begin{eqnarray}
\label{eqn4444}
    {\bf H}_t & \Define & \begin{bmatrix}
        {\bf H}_{t,Q,1} \\
        {\bf H}_{t,Q-1,0} \\
    \end{bmatrix} \, = \, \begin{bmatrix}
         {\bf A} {\bf \Phi}_{Q,1} \\
         {\bf A} {\bf \Lambda}_{\alpha} {\bf \Phi}_{Q,1} \\
    \end{bmatrix}, \nonumber \\
    & = & \begin{bmatrix}
         {\bf A}  \\
         {\bf A} {\bf \Lambda}_{\alpha} \\
    \end{bmatrix} \,  {\bf \Phi}_{Q,1}.
\end{eqnarray}The singular value decomposition (SVD) of ${\bf H}_t \in {\mathcal C}^{2N_t \times Q}$ is given by
\begin{eqnarray}
\label{eqn4545}
    {\bf H}_t & = & {\bf U}_t  {\bf S}_t {\bf V}_t^H,
\end{eqnarray}where the first $P$ columns of ${\bf U}_t$ spans the column space of ${\bf H}_t$. From
    (\ref{eqn4444}) it follows that the matrix $\begin{bmatrix}
         {\bf A}  \\
         {\bf A} {\bf \Lambda}_{\alpha} \\
    \end{bmatrix}$ has $P$ columns with full column rank $P$ since ${\bf A} \in {\mathbb C}^{N_t \times P}$ has full column rank $P$ and ${\bf \Lambda}_{\alpha}$ is a full rank $P \times P$ diagonal matrix.
    Further from (\ref{eqn4444}) and (\ref{eqn4545}) it follows that the column space of $\begin{bmatrix}
         {\bf A}  \\
         {\bf A} {\bf \Lambda}_{\alpha} \\
    \end{bmatrix}$ is same as the column space of the first $P$ columns of ${\bf U}_t$ denoted by 
    \begin{eqnarray}
        {\bf U}_{t,s} & = & \begin{bmatrix}
            {\bf U}_{t,1} \\
            {\bf U}_{t,2} \\
        \end{bmatrix}, \nonumber \\
        {\bf U}_{t,1} & \Define & {\bf U}_t(1:N_t,1:P), \nonumber \\
         {\bf U}_{t,2} & \Define & {\bf U}_t(N_t+1:2N_t,1:P),
    \end{eqnarray}i.e.
    \begin{eqnarray}
        \text{col. space}\left( {\bf U}_{t,s} \right) & = & \text{col. space}\left( \begin{bmatrix}
         {\bf A}  \\
         {\bf A} {\bf \Lambda}_{\alpha} \\
    \end{bmatrix}\right).
    \end{eqnarray}Therefore, there exists a unique invertible transformation matrix ${\bf T}_t \in {\mathbb C}^{P \times P}$ such that
    \begin{eqnarray}
        {\bf U}_{t,s} & = & \begin{bmatrix}
         {\bf A}  \\
         {\bf A} {\bf \Lambda}_{\alpha} \\
    \end{bmatrix} \, {\bf T}_t,
    \end{eqnarray}i.e.
    \begin{eqnarray}
        {\bf U}_{t,1} & = & {\bf A} {\bf T}_t \,,\, \nonumber \\
        {\bf U}_{t,2} & = &  {\bf A} {\bf \Lambda}_{\alpha} \, {\bf T}_t \, = \, {\bf A} {\bf T}_t \, {\bf T}_t^{-1} {\bf \Lambda}_{\alpha} \, {\bf T}_t.
    \end{eqnarray}From this it follows that
    \begin{eqnarray}
        {\bf U}_{t,2} & = & {\bf U}_{t,1} \, {\bf \Psi}_t\,\, , \,\, {\bf \Psi}_t \, \Define \, {\bf T}_t^{-1} {\bf \Lambda}_{\alpha} \, {\bf T}_t.
    \end{eqnarray}This then implies that
    \begin{eqnarray}
      {\bf \Psi}_t & = & {\bf U}_{t,1}^\dagger \, {\bf U}_{t,2},
    \end{eqnarray}where
    \begin{eqnarray}
        {\bf U}_{t,1}^\dagger & \Define & \left({\bf U}_{t,1}^H {\bf U}_{t,1}\right)^{-1} {\bf U}_{t,1}^H
    \end{eqnarray}is the Moore-Penrose pseudo-inverse of ${\bf U}_{t,1}$. Since ${\bf \Psi}_t = {\bf T}_t^{-1} {\bf \Lambda}_{\alpha} {\bf T}_t$, we have
    \begin{eqnarray}
        {\bf \Psi}_t {\bf T}_t^{-1} & = & {\bf T_t}^{-1} {\bf \Lambda}_{\alpha}
    \end{eqnarray}
    which implies that the diagonal elements of the diagonal matrix ${\bf \Lambda}_{\alpha}$, i.e., $\alpha_i, i=1,2,\cdots, P$ are the $P$ eigenvalues of ${\bf \Psi}_t = {\bf U}_{t,1}^{\dagger} {\bf U}_{t,2}$. We use this fact to derive estimates of unit modulus $\alpha_i,i=1,2,\cdots, P$. Further, from (\ref{eqn38}) we know that ${\bf H}_{t,Q-1, 0}  =  {\bf A} \, {\bf \Phi}_{Q-1, 0}$. Since ${\bf \Phi}_{Q-1, 0}$ has full row rank $P$,
    it follows that
    \begin{eqnarray}
    \label{eqn545478}
        {\bf A} & = & {\bf H}_{t,Q-1,0} \, {\bf \Phi}_{Q-1, 0}^\dagger.
    \end{eqnarray}This can be used to estimate ${\bf A}$.

    In a similar manner, the frequency-phase parameters $\beta_i,i=1,2,\cdots,P$ can be estimated from the channel estimates from the $Q$ training frames in frequency, i.e., ${\bf H}_{f, Q,1}$ and ${\bf H}_{f,Q-1, 0}$. To be precise,
    from Lemmas \ref{lem1}, \ref{lem2} and \ref{lem3} it also follows that
    \begin{eqnarray}
    {\bf H}_{f,Q-1,0} = {\bf A} \, {\bf \Theta}_{Q-1,0} & , & {\bf H}_{f,Q,1} = {\bf A} \, {\bf \Theta}_{Q,1}, \nonumber \\
    {\bf \Theta}_{Q, 1} & = & {\bf \Lambda}_{\beta} \, {\bf \Theta}_{Q-1,0}.
\end{eqnarray}Just as the training frames in time can be used to estimate $\alpha_i$, the
estimates of $\beta_i$ are given by the eigenvalues of ${\bf U}_{f,1}^\dagger {\bf U}_{f,2}$, where $\begin{bmatrix}
            {\bf U}_{f,1} \\
            {\bf U}_{f,2} \\
        \end{bmatrix}$ are the first $P$ columns of the left singular matrix ${\bf U}_f$ in the SVD
        \begin{eqnarray}
          \begin{bmatrix}
            {\bf H}_{f,Q-1,0} \\
            {\bf H}_{f,Q,1} \\
        \end{bmatrix}  & = & {\bf U}_f {\bf S}_f {\bf V}_f^H.
        \end{eqnarray}
        Further, from (\ref{eqn383975}) we can estimate ${\bf A}$ using
    \begin{eqnarray}
    \label{eqn54549}
        {\bf A} & = & {\bf H}_{f,Q-1,0} \, {\bf \Theta}_{Q-1, 0}^\dagger.
    \end{eqnarray}
    
Based on the above, in the following we list the steps of the proposed prediction method.
\subsubsection{Step-$1$: Estimation of the number of dominant channel paths}
From Lemma \ref{lem1}, we know that the rank of ${\bf H}_{t,q-1, 0} = P$ only if $q \geq P$. For $q < P$, rank of ${\bf H}_{t,q-1, 0}$ is strictly less than $P$. This gives us a method to estimate $P$ as follows. Starting with $q=1$, we assemble the estimated channel filters from the training frames into the matrix $\widehat{{\bf H}}_{t,q-1, 0}$ which is same as (\ref{eqn92774}) but with its elements replaced by the estimated channel filters
\begin{eqnarray}
\label{step1eqn}
    \widehat{h}_{n,m}[k,l], (n,m) \in {\mathcal F} \,,\, (k,l) \in {\mathcal S}.
\end{eqnarray}Estimation of the effective channel filters in the training frames briefly reviewed in Section \ref{subsectrainest}.
 Through QR factorization of $\widehat{{\bf H}}_{t,q-1, 0}$ we find the number of its dominant linearly independent columns (equal to the number of dominant diagonal entries of the upper triangular matrix in QR factorization). We do this for increasing $Q$ starting with $Q=1$ and stop when there is no further increase in the number of linearly independent columns. This maximal number of dominant linearly independent columns is the estimate of the number of dominant channel paths, denoted by $\widehat{P}$. The QR factorization complexity of $\widehat{{\bf H}}_{t,q-1, 0} \in {\mathbb C}^{N_t \times q}$ is $O(N_tq^2)$ \cite{Golub, Horn13}.
  Therefore the overall complexity of estimating the number of dominant paths in the physical channel
is $O(N_t Q^3)$.
For typical operating values of the pilot power to noise ratio in the training frames, it is observed that $\widehat{P} = P$ with high probability.
\subsubsection{Step-$2$: Estimation of the parameters $\alpha_i, i=1,2,\cdots, \widehat{P}$}
We choose $Q$ such that $N_t > Q > \widehat{P}$.
Let $\widehat{\bf H}_t \, \Define \, \begin{bmatrix}
        \widehat{\bf H}_{t,Q,1} \\
        \widehat{\bf H}_{t,Q-1,0} \\
    \end{bmatrix}$.
Let $\widehat{\bf U}_{t,s} \begin{bmatrix}
    \widehat{\bf U}_{t,1} \\
    \widehat{\bf U}_{t,2}
\end{bmatrix}$ denote the first $\widehat{P}$ columns of the matrix $\widehat{{\bf U}}_t$ of left singular vectors in the SVD
$\widehat{\bf H}_t = \widehat{\bf U}_{t} \widehat{\bf S}_t \widehat{\bf V}_t^H$.
Further, let
\begin{eqnarray}
\label{eigalpha}
    (e_{t,1}, \cdots, e_{t,\widehat{P}}) & \Define & \text{Eigenvalues of} \, \widehat{\bf U}_{t,1}^\dagger \, {\widehat{\bf U}_{t,2}}.
\end{eqnarray}Due to noisy channel estimation in the training frames, the eigenvalues almost never have unit modulus and so we set the estimated $\widehat{\alpha}_i = e_{t,i}/ \vert e_{t,i} \vert$.
\begin{eqnarray}
\label{alphaest}
    \widehat{\alpha}_i & \Define & \frac{e_{t,i}}{\vert e_{t,i} \vert}, \, i=1,2,\cdots, \widehat{P}.
\end{eqnarray}The overall complexity of step-$2$ is $O(N_t Q^2)$.

\subsubsection{Step-$3$: Estimation of the parameters $\beta_i, i=1,2,\cdots, \widehat{P}$.}
Let $\widehat{\bf H}_f \, \Define \, \begin{bmatrix}
        \widehat{\bf H}_{f,Q-1,0} \\
        \widehat{\bf H}_{f,Q,1} \\
    \end{bmatrix}$.
Let $\widehat{\bf U}_{f,s} \begin{bmatrix}
    \widehat{\bf U}_{f,1} \\
    \widehat{\bf U}_{f,2}
\end{bmatrix}$ denote the first $\widehat{P}$ columns of the matrix $\widehat{{\bf U}}_f$ of left singular vectors in the SVD
$\widehat{\bf H}_f = \widehat{\bf U}_{f} \widehat{\bf S}_f \widehat{\bf V}_f^H$.
Further, let
\begin{eqnarray}
\label{eigbeta}
    (e_{f,1}, \cdots, e_{f,\widehat{P}}) & \Define & \text{Eigenvalues of} \, \widehat{\bf U}_{f,1}^\dagger \, {\widehat{\bf U}_{f,2}}.
\end{eqnarray}Due to noisy channel estimation in the training frames, the eigenvalues almost never have unit modulus and so we set the estimated $\widehat{\widehat{\beta}}_i = e_{f,i}/ \vert e_{f,i} \vert$.
\begin{eqnarray}
    \widehat{\widehat{\beta}}_i & \Define & \frac{e_{f,i}}{\vert e_{f,i} \vert}, \, i=1,2,\cdots, \widehat{P}.
\end{eqnarray}The overall complexity of step-$3$ is $O(N_t Q^2)$.

\subsubsection{Step-$4$: Estimation of $A_i[k,l]$, $i=1,2,\cdots, \widehat{P}$, $(k,l) \in {\mathcal S}$}
Using (\ref{eqn545478}) an estimate of ${\bf A}$ based on the training frames in time is given by
    \begin{eqnarray}
    \label{eqn63}
        \widehat{\bf A}_t & \Define & \widehat{\bf H}_{t,Q-1,0} \, \widehat{\bf \Phi}_{Q-1, 0}^\dagger,
    \end{eqnarray}where
  \begin{eqnarray}
\label{eqn37est}
    \widehat{\Phi}_{Q-1, 0} & \Define 
    \begin{bmatrix}
         \widehat{\alpha}_1^{-(Q-1)} &  \dots &  \widehat{\alpha}_1^{-1} &  1 \\
          \widehat{\alpha}_2^{-(Q-1)} &  \dots &  \widehat{\alpha}_2^{-1} &  1\\
        \vdots & \vdots & \vdots \\
         \widehat{\alpha}_{\widehat{P}}^{-(Q-1)} &  \dots &  \widehat{\alpha}_{\widehat{P}}^{-1} &  1 \\
    \end{bmatrix}.
\end{eqnarray}

In a similar manner, based on (\ref{eqn54549}) we derive another estimate of ${\bf A}$ based on the frequency domain training frames and the estimated parameters $\widehat{\widehat{\beta}}_i, i=1,2,\cdots, \widehat{P}$, which is given by
\begin{eqnarray}
    \label{eqnaf}
    \widehat{{\bf A}}_f & \Define & {\bf H}_{f,Q-1, 0} \, \widehat{\bf \Theta}_{Q-1, 0}^\dagger,  
\end{eqnarray}where
\begin{eqnarray}
\label{eqn39est}
    \widehat{\bf \Theta}_{Q-1, 0} & \Define 
    \begin{bmatrix}
         \widehat{\widehat{\beta}}_1^{-(Q-1)} &  \dots &  \widehat{\widehat{\beta}}_1^{-1} &  1 \\
          \widehat{\widehat{\beta}}_2^{-(Q-1)} &  \dots &  \widehat{\widehat{\beta}}_2^{-1} &  1\\
        \vdots & \vdots & \vdots \\
         \widehat{\widehat{\beta}}_{\widehat{P}}^{-(Q-1)} &  \dots &  \widehat{\widehat{\beta}}_{\widehat{P}}^{-1} &  1 \\
    \end{bmatrix}.
\end{eqnarray}Even in the absence of noise, i.e. with perfect channel estimates in the training frames, the columns in $\widehat{{\bf A}}_t$ and $\widehat{{\bf A}}_f$ although being same may be ordered differently, i.e., the first column in $\widehat{{\bf A}}_t$
may not be the first column in
$\widehat{{\bf A}}_f$. This is because in (\ref{eigalpha}) and (\ref{eigbeta}), the eigenvalues $e_{t,i},i=1,2,\cdots,\widehat{P}$ and $e_{f,i},i=1,2,\cdots,\widehat{P}$ could be permuted with respect to each other. To resolve the ordering ambiguity between $\widehat{{\bf A}}_t$ and $\widehat{{\bf A}}_f$, we need to find a $\widehat{P} \times \widehat{P}$ permutation matrix $\Tilde{\bf P}$ such that
\begin{eqnarray}
    \widehat{{\bf A}}_t & \approx & \widehat{{\bf A}}_f \, \Tilde{\bf P}.
\end{eqnarray}
We use the Hungarian algorithm to find the optimal permutation matrix $\Tilde{\bf P}$ which is given by
\begin{eqnarray}
    \label{optpermute}
    \Tilde{\bf P} & = & \arg \min_{{\bf G} \in {\mathcal P}} \left\Vert \widehat{{\bf A}}_t - {\bf G} \widehat{{\bf A}}_f \right\Vert_F^2,
\end{eqnarray}where ${\mathcal P}$ is the set of all $\widehat{P}! = \widehat{P} (\widehat{P}-1) \cdots 2. 1$ $\widehat{P} \times \widehat{P}$ permutation matrices. Also, for any matrix ${\bf B}$, $\Vert B \Vert_F$ denotes its Frobenius norm. The Hungarian algorithm
given the optimal solution for the job assignment problem \cite{E72}, where $\widehat{P}$ jobs are to be assigned to $\widehat{P}$ workers with exactly one job assigned to each worker. The cost of getting the $i$-th job executed by the $j$-th worker is denoted by $C_{i,j}$, $i,j = 1,2,\cdots, \widehat{P}$. Each assignment is a $\widehat{P}$ tuple $(a_1, a_2, \cdots, a_{\widehat{P}})$ with $a_i \in \{ 1,2,\cdots, \widehat{P} \}$, $i=1,2,\cdots, \widehat{P}$, which means that the $i$-th job is assigned to the $a_i$-th worker. The Hungarian algorithm finds the assignment which gives the minimum total cost $\sum\limits_{i=1}^{\widehat{P}} C_{i,a_i}$. 

Our problem of finding the optimal permutation matrix is similar to a job assignment problems where the ${\widehat P}$ columns of $\widehat{{\bf A}}_t$ are the ``jobs" and the
${\widehat P}$ columns of $\widehat{{\bf A}}_f$
are the workers. The cost of assigning/matching the $i$-th column of $\widehat{{\bf A}}_t$ to the $j$-th column of $\widehat{{\bf A}}_f$ is simply $C_{i,j} = \Vert \widehat{A}_{t,i} - \widehat{A}_{f,j} \Vert_2^2$, where $\widehat{A}_{t,i}$ and $\widehat{A}_{f,j}$ denote the $i$-th and $j$-th column of $\widehat{{\bf A}}_t$ and $\widehat{{\bf A}}_f$ respectively. Each assignment is equivalent to a $\widehat{P} \times \widehat{P}$ permutation matrix ${\bf G}$ for which the total assignment/matching cost is therefore $\left\Vert \widehat{{\bf A}}_t - {\bf G} \widehat{{\bf A}}_f \right\Vert_F^2$. 

This optimal permutation is applied to reorder the estimates $\widehat{\widehat{\beta}}_i, i = 1,2,\cdots, \widehat{P}$
resulting in the final estimates
\begin{eqnarray}
\label{eqneqn55}
{\Big [} \widehat{\beta}_1 \, \cdots, \widehat{\beta}_{\widehat{P}} {\Big ]} & \Define & {\Big [} \widehat{\widehat{\beta}}_1 \, \cdots, \widehat{\widehat{\beta}}_{\widehat{P}} {\Big ]} \, \Tilde{\bf P}.
\end{eqnarray}The final estimate of ${\bf A}$ is given by
\begin{eqnarray}
\label{eqneqn56}
    \widehat{\bf A} & \Define & \frac{\left( \widehat{{\bf A}}_t \, + \, \widehat{{\bf A}}_f \, \Tilde{\bf P} \right)}{2}.
\end{eqnarray}From (\ref{eqnadef}), it follows that the estimates of $A_i[k,l]$ are simply the elements of $\widehat{\bf A}$, i.e.
\begin{eqnarray}
    \widehat{\bf A} &\hspace{-3.5mm} = & \hspace{-3.5mm} \begin{bmatrix}
        \widehat{A}_1[k_1, l_1] & \widehat{A}_2[k_1, l_1] & \dots & \widehat{A}_P[k_1, l_1] \\
        \widehat{A}_1[k_2, l_2] & \widehat{A}_2[k_2, l_2] & \dots & \widehat{A}_P[k_2, l_2] \\
        \vdots & \vdots & \vdots & \vdots  \\
         \widehat{A}_1[k_{N_t}, l_{N_t}] & \widehat{A}_2[k_{N_t}, l_{N_t}] & \dots & \widehat{A}_P[k_{N_t}, l_{N_t}].
    \end{bmatrix}
\end{eqnarray}The overall complexity of step-$4$ is $O(N_tPQ)$.

\subsubsection{Step-$5$: Generation of the predicted effective DD domain channel filter for the $(n,m)$-th frame}
Based on the estimates acquired in Step-$2$,$3$, $4$ and $5$, the proposed estimate of the effective DD domain channel filter in the $(n,m)$-th frame is given by
\begin{eqnarray}
\label{eqngenpredict}
    \widehat{h}_{n,m}[k,l] & = & \begin{cases}
        \sum\limits_{i=1}^{\widehat{P}} (\widehat{\alpha}_i)^n \, (\widehat{\beta}_i)^m \, \widehat{A}_i[k,l] &, (k,l) \in {\mathcal S}\\
        0 &, \text{otherwise} \\
    \end{cases}.
\end{eqnarray}
The complexity of generating $\widehat{h}_{n,m}[k,l], (k,l) \in {\mathcal S}$ is therefore
$O(PN_t)$.

Since $N_t \geq Q \geq P$, the overall computational complexity of the proposed prediction method is $O(N_tQ^3)$. Based on exhaustive simulation results, it has been observed that channel prediction made by the proposed method are accurate even $60$ frames into the future and $60$ frames apart in frequency. As we need to carry out the prediction only after every $50-60$ frames in time, the computations involved in the proposed method can be spread over several frames in time, thereby relaxing the computational requirement.

\subsection{Estimation of the effective DD domain channel in training frames}
\label{subsectrainest}
Training frames are regular Zak-OTFS frames as shown in Fig.~\ref{fig_singlepilot}, with a single pilot pulsone at DD location $(k_p, l_p)$ depicted as a red dot with the received channel response being spread over DD taps depicted as a pink ellipse. The DD domain taps in the ``green" data region carry information symbols and are separated from the pilot region through guard regions which do not carry information symbols. The pilot and guards regions are an overhead as they do not carry information. An important aspect of the proposed prediction method is that it allows us to avoid this overhead (increase effective spectral efficiency) in the non-training/prediction frames whose effective DD channel filter is predicted by the proposed method, as these non-training frames do not need a point pilot. As shown in Fig.~\ref{fig_singlepilot},
the overhead is $(2k_{max} + 5)N$ out of $MN$ DD carriers/pulsones, where $k_{max} \Define \lceil B \tau_{max} \rceil$.
Estimation of the effective DD channel filter from the pilot response received in the pilot region of the training frames has been described in \cite{zakotfs2,twobytwopaper}.

    \begin{figure}
     \centering
        \includegraphics[width=9.0cm,height=6.0cm]{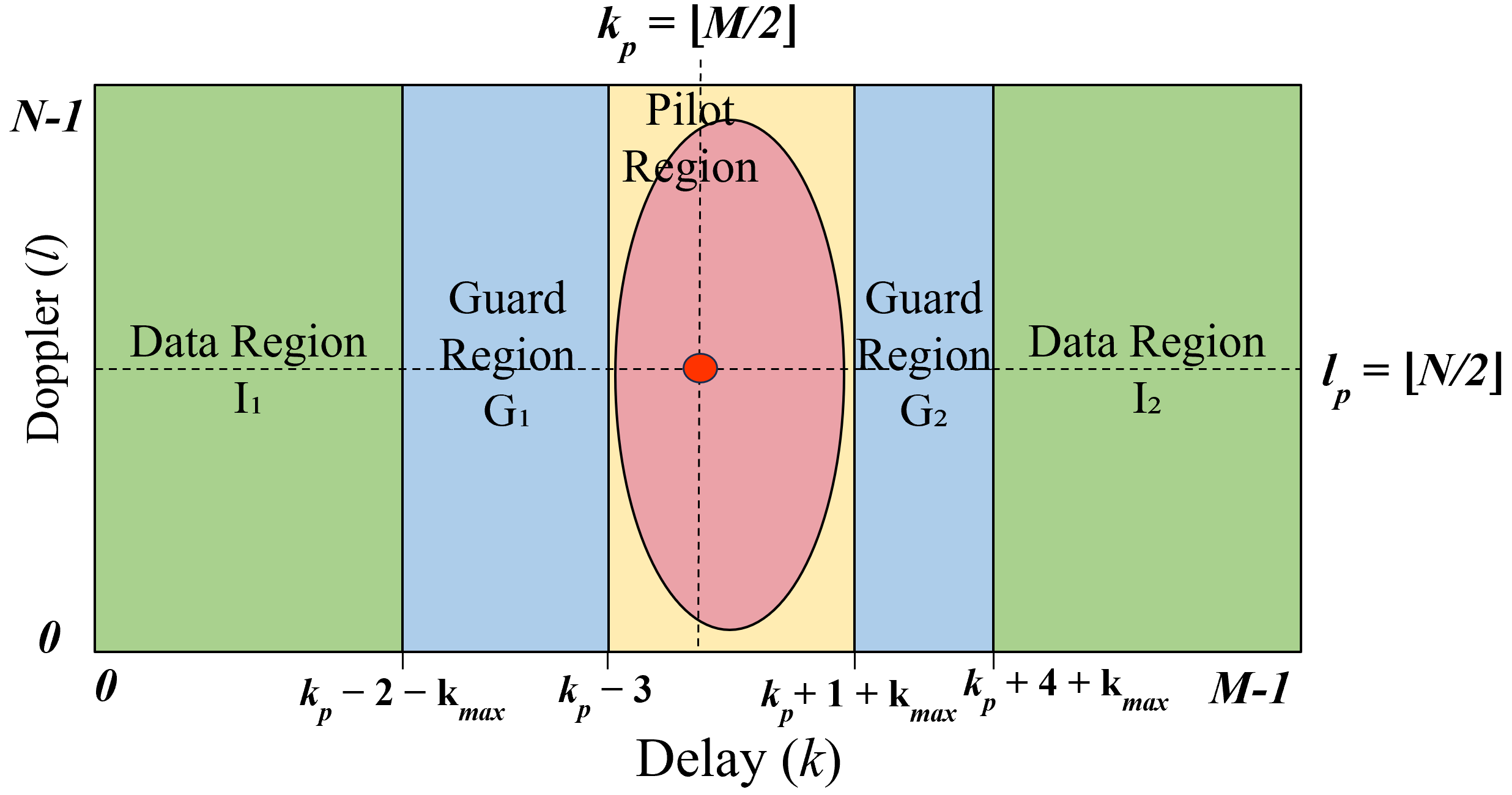}
        \vspace{-2mm}
        \caption{Zak-OTFS frame in DD domain with single pilot (depicted by a red dot) at $(k_p, l_p)$. The pink shaded ellipse depicts the support set of the channel response to the pilot (i.e., ${\mathcal S} \, + \, (k_p, l_p)$). Figure taken from our work in \cite{twobytwopaper}.}
        \label{fig_singlepilot}
        \vspace{-3mm}
    \end{figure}
\subsection{On the stationarity of the DD domain spreading function $h_{\text{phy}}(\tau, \nu)$ of the physical channel}
\label{subsecstationarity}
The proposed method gives accurate prediction
for frames ahead in time only as long as the time duration $nT'$ between the current frame $(0,0)$ and the future frame $(n,m)$ is not so high that the paths shift from one DD tap to the other. This could happen for example if the distance between the transmitter and a mobile receiver increases by $c/B$ so that the delay increases by $1/B$, i.e., the path moves to the next delay tap. If the receiver is moving at a constant velocity $v$ then this can happen in about $c/(vB)$ seconds. As an example with $B = 1$ MHz and $v = 30$ m/s, $c/(vB) = 10$ seconds. Even for $B=10$ MHz this time duration is $1$ second. Another reason could be due to change in the receiver's velocity which could be either due to linear acceleration or due to change in the angle of arrival of the signal at the mobile receiver. Each Doppler tap is $1/T$ Hz. For $T = 1$ ms, this is $1$ kHz, and therefore the Doppler of any path has to shift by $1$ kHz for it to move to the next Doppler bin. Even for a path with a Doppler shift of $2$ kHz, the cosine of the angle between the velocity vector of the receiver and the direction of signal arrival has to change by $0.5$, for example a change in this angle from zero to $60$ degrees. This could happen for example if the mobile receiver moves around a corner at a road intersection which will still need at least one-tenth of a second. Therefore, in this paper we assume that the DD spreading function is almost stationary for fifty to sixty milliseconds, which allows us to predict the effective DD channel roughly up to fifty milliseconds in future. 

Regarding the prediction for frames $(0,m)$ at current time but at frequency $mB'$ Hz away from the $(0,0)$-th frame, we are interested in the stationarity of $h_{\text{phy}}(\tau, \nu)$ across the frequency domain. One possible source of variation in frequency could be due to the non-constant response of transmit and receiver filters which will still be several hundreds of MHz.

\subsection{On the choice of the support set ${\mathcal S}$}
\label{subsecnt}
We form an average energy profile for the DD taps of the effective DD domain channel filter using the DD channel estimates obtained from the set of training frames, i.e.
\begin{eqnarray}
    E[k,l] & \Define & \frac{1}{\vert {\mathcal F} \vert } \sum\limits_{(n,m) \in {\mathcal F}}  \left\vert  \widehat{h}_{n,m}[k,l] \right\vert^2,
\end{eqnarray}for all $(k,l) \in {\mathcal S}'$, where ${\mathcal S}'$ is a large set of DD taps which covers almost all energy in the effective DD domain channel filter, i.e.\footnote{\footnotesize{We assume sufficiently high $T$ and $B$ such that each DD tap receives most energy from only a single or very few paths. Since $\alpha_i$ and $\beta_i$ are unit modulus, we have observed that for any $(k,l) \in {\mathbb S}$, $\vert h_{n,m}[k,l] \vert^2$ changes very slowly with $(n,m)$ and we therefore assume that it is almost stationary for the TF region of the training frames and we predict the effective DD channel filter for only those frames where this stationarity holds.}}
\begin{eqnarray}
    \sum\limits_{(k,l) \in {\mathcal S}'} \vert h_{0,0}[k,l] \vert^2 & \approx & \sum\limits_{(k,l) \in {\mathbb Z}} \vert h_{0,0}[k,l] \vert^2.
\end{eqnarray}We then sort the taps based on their energies $E[k,l]$, such that the list of sorted taps in descending order of energy is given by $(k_1,l_1), (k_2, l_2), \cdots, $. We then choose the number of taps $N_t$ to be the greatest positive integer $u$ such that $E[k_{u},l_u]/E[k_1, l_1]$ is greater than $X$ dB. We choose $X$ based on apriori knowledge of the channel power gain ratio between the strongest and the weakest physical channel path, i.e., ${\mathbb E}[\vert h_P \vert^2]/{\mathbb E}[\vert h_1 \vert^2]$. In this paper we choose $X = 0.01 \, {\mathbb E}[\vert h_P \vert^2]/{\mathbb E}[\vert h_1 \vert^2]$, which is $-40$ dB for the Vehicular-A channel whose power delay profile is shown in Table \ref{tabveha}. ${\mathcal S}$ is then given by $\{ (k_1, l_1), (k_2, l_2), \cdots, (k_{N_t}, l_{N_t}) \}$. Also, another constraint is that we must choose $N_t$ such that it is greater than both $Q$ and $P$ (for the proposed method to work). If this is not the case then we relax $X$ so that this constraint is just satisfied.

\begin{table}[!t]
\vspace{-2mm}
    \centering
    \caption{Power-delay profile of Veh-A channel model}
    \vspace{-2mm}
    \begin{tabular}{|c|c|c|c|c|c|c|}
         \hline
         Path index $i$ & 1 & 2 & 3 & 4 & 5 & 6 \\
         \hline
         Delay $\tau_i (\mu s)$ & 0 & 0.31 & 0.71 & 1.09 & 1.73 & 2.51 \\
         \hline
         Relative power (dB) & 0 & -1 & -9 & -10 & -15 & -20 \\
         \hline
    \end{tabular}
    \label{tabveha}
    \vspace{-3mm}
\end{table}

\section{Numerical results}
\label{secsim}
In this section we present simulations results in support of the proposed prediction method for the
$P=6$ path Vehicular-A channel model whose power-delay profile is given in Table \ref{tabveha} \cite{EVAITU}.
The channel path gains $h_i, i=1,2,\cdots, P$ are modeled as zero mean independent circularly symmetric complex Gaussian random variables with variance ${\mathbb E}[\vert h_i \vert^2]$. The ratio ${\mathbb E}[\vert h_i \vert^2]/{\mathbb E}[\vert h_1 \vert^2]$ is equal to the relative power given in Table \ref{tabveha}. The Doppler shift for the $i$-th channel path $\nu_i = \nu_{max} \cos(\theta_i)$, where $\theta_i, i=1,2,\cdots,P$ are i.i.d. uniformly distributed in the interval $[0 \,,\, 2 \pi)$, and $\nu_{max}$ is the maximum possible Doppler shift of any path. If the relative speed between the transmitter and receiver is $v$ m/s then we consider $\nu_{max} = v f_c/c$, where $f_c$ is the carrier frequency and $c$ is the speed of light.

The other simulation parameters are as follows. We consider $\nu_p = 60$ kHz, $\tau_p = 1/\nu_p = 16.66 \mu s$, $M = 50$, $N = 50$, which implies that $B = M \nu_p = 3$ MHz, $T = N \tau_p = 0.8333$ ms. We choose the pulse-shaping filter to be the Root Raised Cosine (RRC) filter which is given by
\cite{SHDigcomm}

{\vspace{-4mm}
\small
\begin{eqnarray}
\label{rrcpulse_eqn1}
w_{tx}(\tau,\nu) & = &  \sqrt{BT} \, rrc_{_{\beta_{\tau}}}( M \nu_p \tau ) \,  rrc_{_{\beta_{\nu}}}( N \tau_p \nu ), \nonumber \\
rrc_{_{\beta}}(x) &  = &  \frac{\sin(\pi x (1 - \beta)) + 4 \beta x \cos(\pi x (1 + \beta))}{\pi x \left( 1 - (4 \beta x)^2 \right)},
\end{eqnarray}\normalsize}where $\beta_{\tau}$ and $\beta_{\nu}$ are the roll-off factors which are chosen to be $\beta_{\tau} = \beta_{\nu} = 0.2$.
Therefore, the time-duration and bandwidth of each frame is $B' = (1 + \beta_{\tau})B = 3.6$ MHz and $T' = (1 + \beta_{\nu}) T = 1$ ms.
The number of DD filter taps $N_t$ depends on $X$ (the ratio of the average energy of the weakest DD tap to that of the strongest tap) which we choose to be $-40$ dB. $N_t$ also depends on the channel delay and Doppler spread. 

    \begin{figure}
     \centering
\includegraphics[width=9.0cm,height=5.6cm]{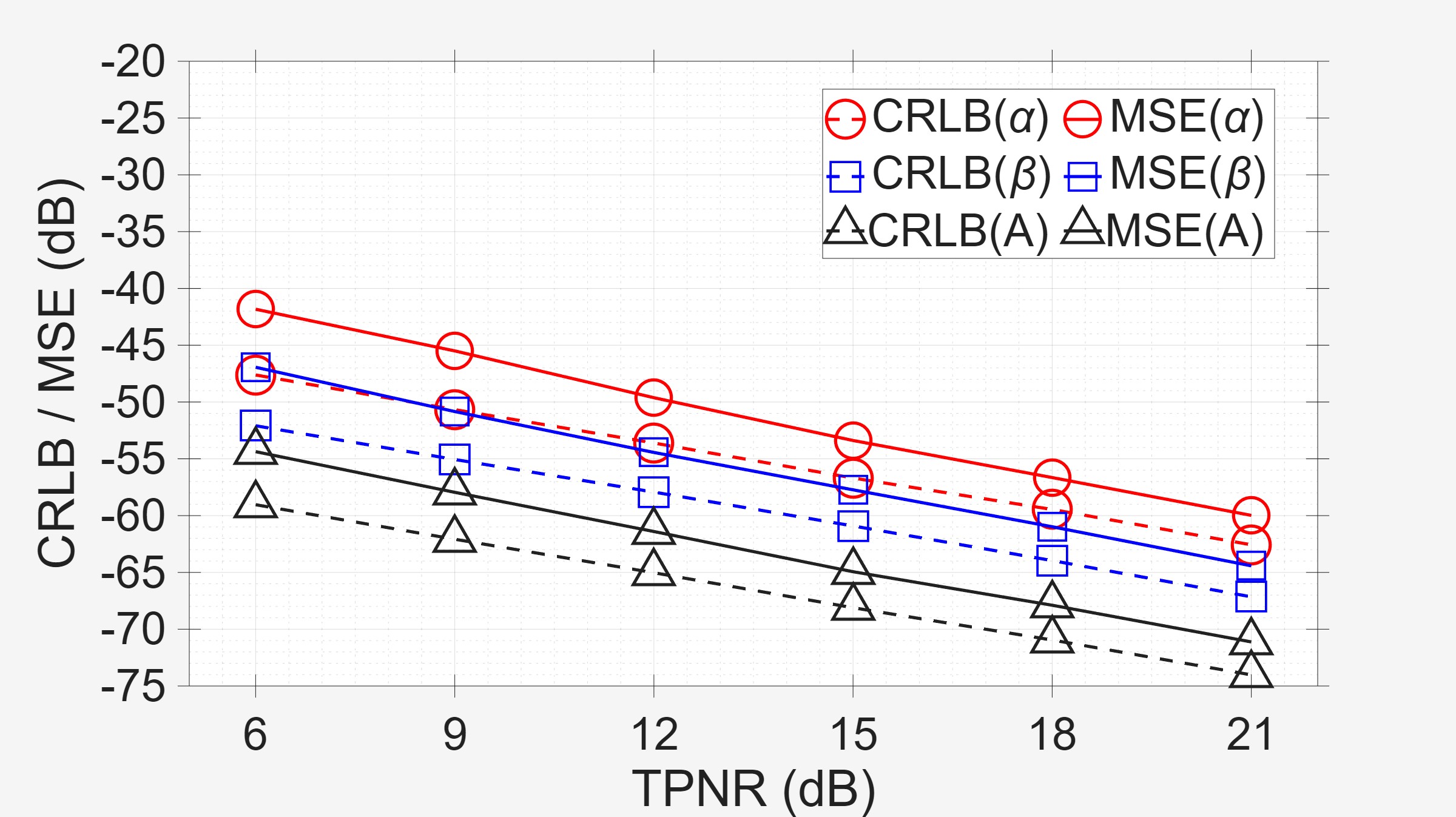}
        \hspace{-3mm}
        \caption{MSE error of the proposed estimation of the time-phase, frequency-phase parameters and the DD signature matrix.}
        \label{figcrlb}
        \vspace{-3mm}
    \end{figure}
In Fig.~\ref{figcrlb} we plot the average Mean Squared Estimation Error (MSE) of the parameters $\alpha_i, \beta_i, i=1,2,\cdots, P$ and $A_i[k,l], (k,l) \in {\mathcal S}$ for the proposed prediction method for $\nu_{max} = 1$ kHz. The average MSE for the time-phase parameters is given by ${\mathbb E}\left[ \sum\limits_{i=1}^P \vert \widehat{\alpha}_i - \alpha_i \vert^2\right]/P$, that for the frequency-phase parameters is given by ${\mathbb E}\left[ \sum\limits_{i=1}^P \vert \widehat{\beta}_i - \beta_i \vert^2\right]/P$, and the MSE for the DD signature matrices is given by ${\mathbb E}\left[ \sum\limits_{i=1}^P \sum\limits_{(k,l) \in {\mathcal S}}\vert \widehat{A}_i[k,l] - A_i[k,l] \vert^2\right]/{\mathbb E}\left[ \sum\limits_{i=1}^P \sum\limits_{(k,l) \in {\mathcal S}}\vert A_i[k,l] \vert^2\right]$. The expectation in the MSE expression is w.r.t.
the random channel realizations, AWGN and data symbols transmitted in the training frames. For a baseline comparison, we also plot the Cramer Rao Lower Bound (CRLB) for these estimates \cite{SM93}. We consider $Q = 30$ (i.e., $2Q+1 = 61$ training frames) and $N_t = 60$ taps (sufficient for $\nu_{max} = 1$ kHz). The pilot power to data power ratio (PDR) in the training frames is $0$ dB.

In Fig.~\ref{figcrlb} we see that the MSE of the proposed ESPIRIT-type estimation method is close to CRLB for a wide range of the total transmit power to noise ratio (TPNR) in the training frames. Since PDR is $0$ dB, the ratio of the pilot power to noise power is $3$ dB less than TPNR. We expect the gap between CRLB and the proposed estimation method to decrease further with increasing $Q$ \cite{Ottersen91}.\footnote{\footnotesize{In the current simulations, the number of parameters to be estimated is $2P + N_t P =372$, and the number of observations (i.e., DD channel taps estimated from the training frames) is $(2Q+1)N_t = 3660$ which is only about ten times larger than the number of parameters. Therefore, the current operating point is far from the asymptote where the number of observations is significantly higher than the number of fixed parameters. Increasing $Q$ can lead us to the asymptote but at the cost of increased prediction complexity and also reduced spectral efficiency since the training frames carry a pilot signal.}} 

    \begin{figure}
     \centering
\includegraphics[width=9.0cm,height=5.6cm]{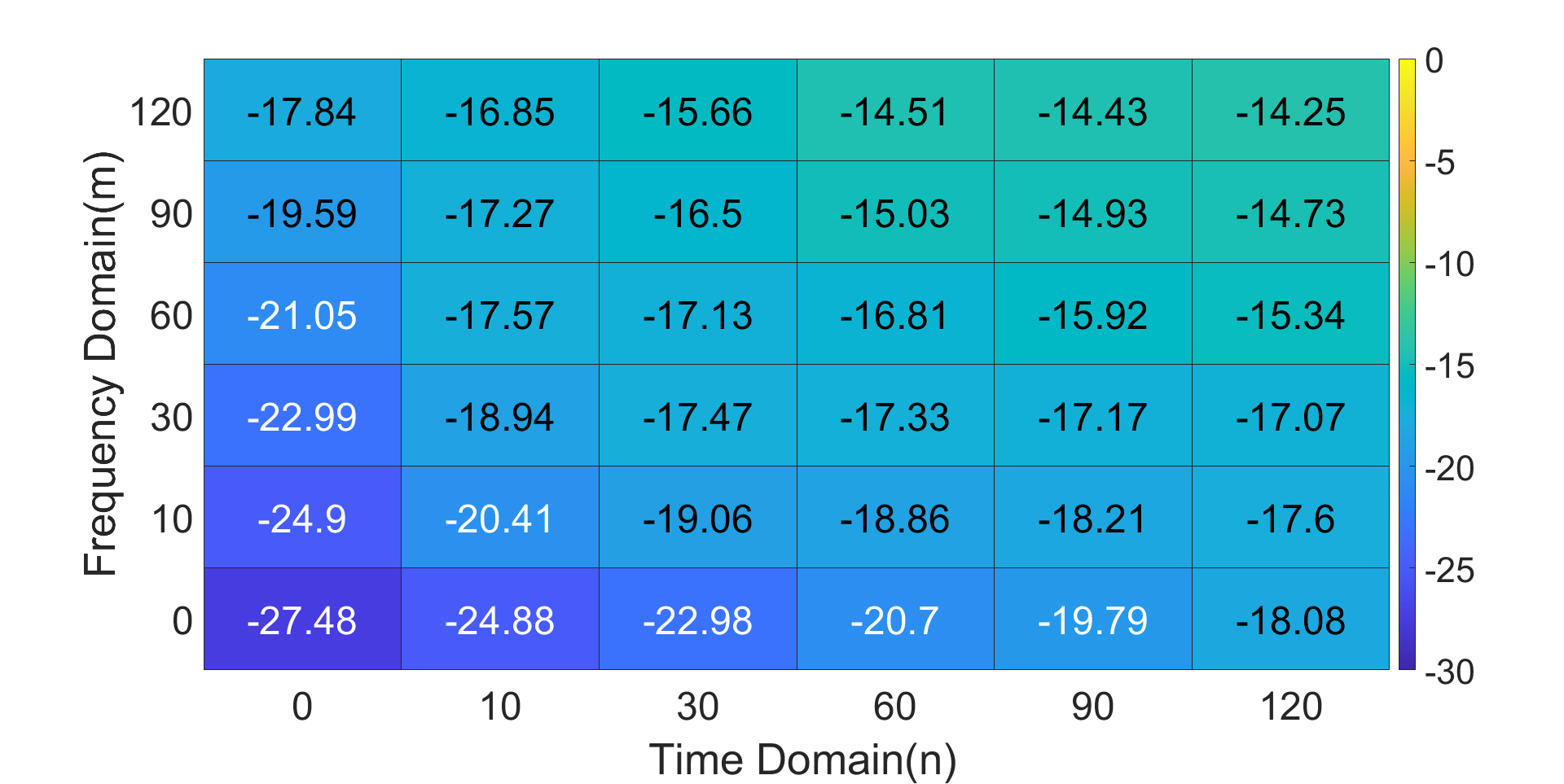}
        \hspace{-3mm}
    \caption{Heatmap of the normalized mean squared prediction error (NMSPE) of the predicted DD domain channel filter $\widehat{h}_{n,m}[k,l]$, as a function of $(n,m)$. TPNR is $18$ dB and PDR is $0$ dB in training frames. Other parameters are same as in Fig.~\ref{figcrlb}.}
        \label{figheatmappredictionhorizon}
        \vspace{-3mm}
    \end{figure}
Next, we study the prediction horizon by measuring the normalized mean squared prediction error (NMSPE) of the predicted DD domain channel filter $\widehat{h}_{n,m}[k,l]$, as a function of $(n,m)$. 
The expression of the predicted $\widehat{h}_{n,m}[k,l]$ is given by (\ref{eqngenpredict}). NMSPE for the $(n,m)$-th prediction frame is given by

{\small
\vspace{-4mm}
\begin{eqnarray}
    \text{NMSPE}(n,m) & \hspace{-3mm} \Define & \hspace{-3mm} \frac{{\mathbb E}\left[ \sum\limits_{k,l \in {\mathbb Z}} \vert \widehat{h}_{n,m}[k,l] - h_{n,m}[k,l] \vert^2 \right]}{{\mathbb E}\left[ \sum\limits_{k,l \in {\mathbb Z}} \vert h_{n,m}[k,l] \vert^2\right]}.
\end{eqnarray}\normalsize}In Fig.~\ref{figheatmappredictionhorizon} we plot the heatmap of NMSPE for increasing $(n,m)$, for a fixed TPNR of $18$ dB and PDR of $0$ dB in the training frames. Other parameters are same as in Fig.~\ref{figcrlb}. As expected, the NMSPE increases with increasing $n$ and $m$, i.e., as we move farther into future in time, or farther away in frequency. The proposed prediction has a long prediction horizon in both time and frequency since the NMSPE for a frame $120 T' = 120$ ms into the future and $120 B' = 432$ MHz away in frequency is still good ($-14.25$ dB).
Even for a low TPNR of $10$ dB (i.e., pilot power to noise power ratio of only $7$ dB in training frames), the NMSPE for the frame at $(n,m) = (120, 120)$ is $-10$ dB (see Fig.~\ref{figheatmappredictionhorizon10dB}). 

The increase in NMSPE with increasing $(n,m)$ is expected since
with increasing $(n,m)$ the difference between the phase of $\widehat{\alpha_i}^n$ and $\alpha_i^n$, and the difference between the phase of 
$\widehat{\beta_i}^m$ and $\beta_i^m$
increases resulting in increasing error between the terms $(\widehat{\alpha}_i)^n \, (\widehat{\beta}_i)^m$ and ${\alpha}_i^n \, \beta_i^m$ in (\ref{eqngenpredict}) and (\ref{eqnthm3}) respectively.
    \begin{figure}
     \centering
\includegraphics[width=9.0cm,height=5.6cm]{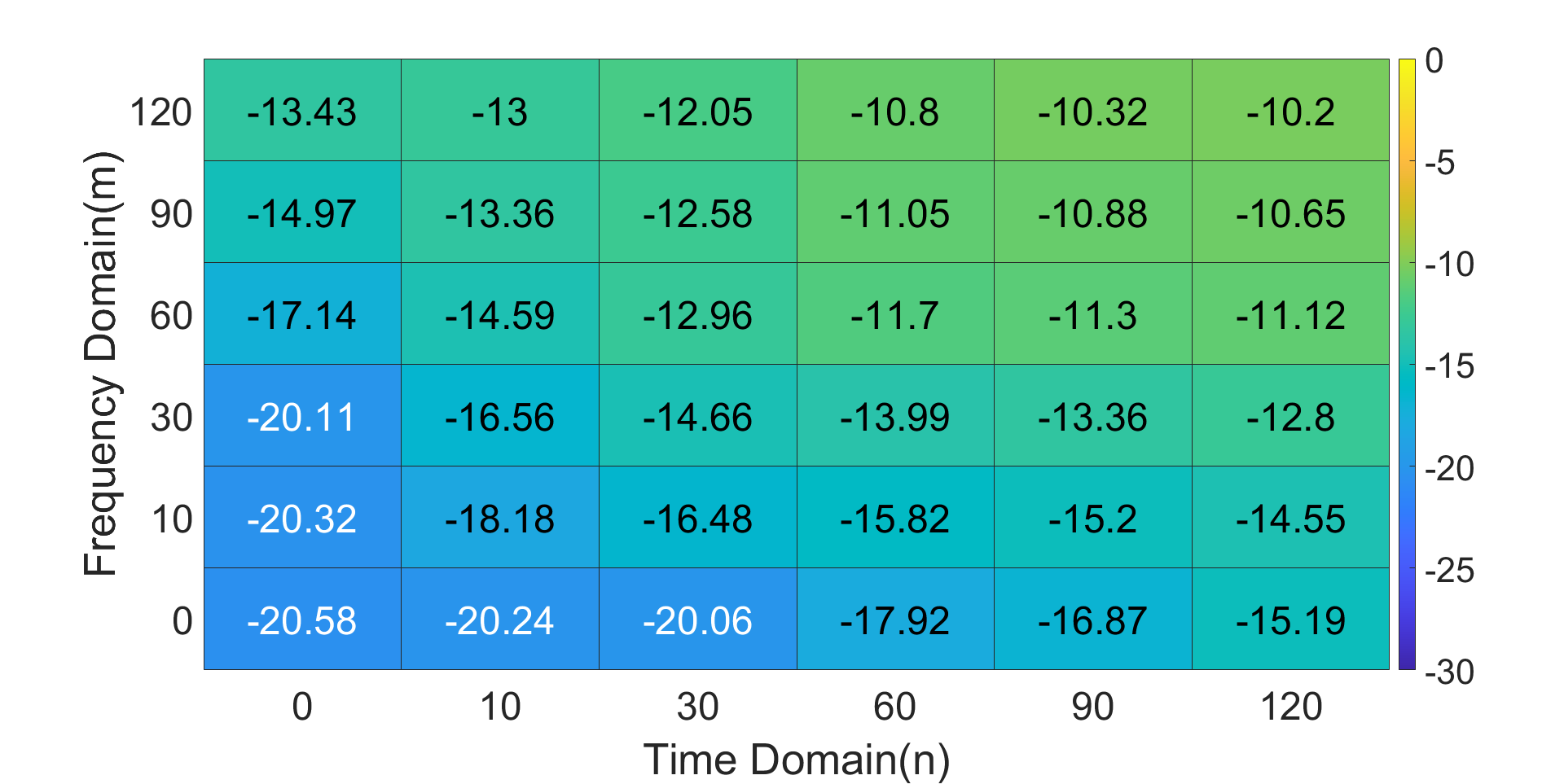}
        \hspace{-3mm}
    \caption{Heatmap of NMSPE of the predicted DD domain channel filter $\widehat{h}_{n,m}[k,l]$, as a function of $(n,m)$. TPNR is $10$ dB and PDR is $0$ dB in training frames. Other parameters are same as in Fig.~\ref{figcrlb}.}
        \label{figheatmappredictionhorizon10dB}
        \vspace{-3mm}
    \end{figure}
    \begin{figure}
     \centering
\includegraphics[width=9.0cm,height=5.6cm]{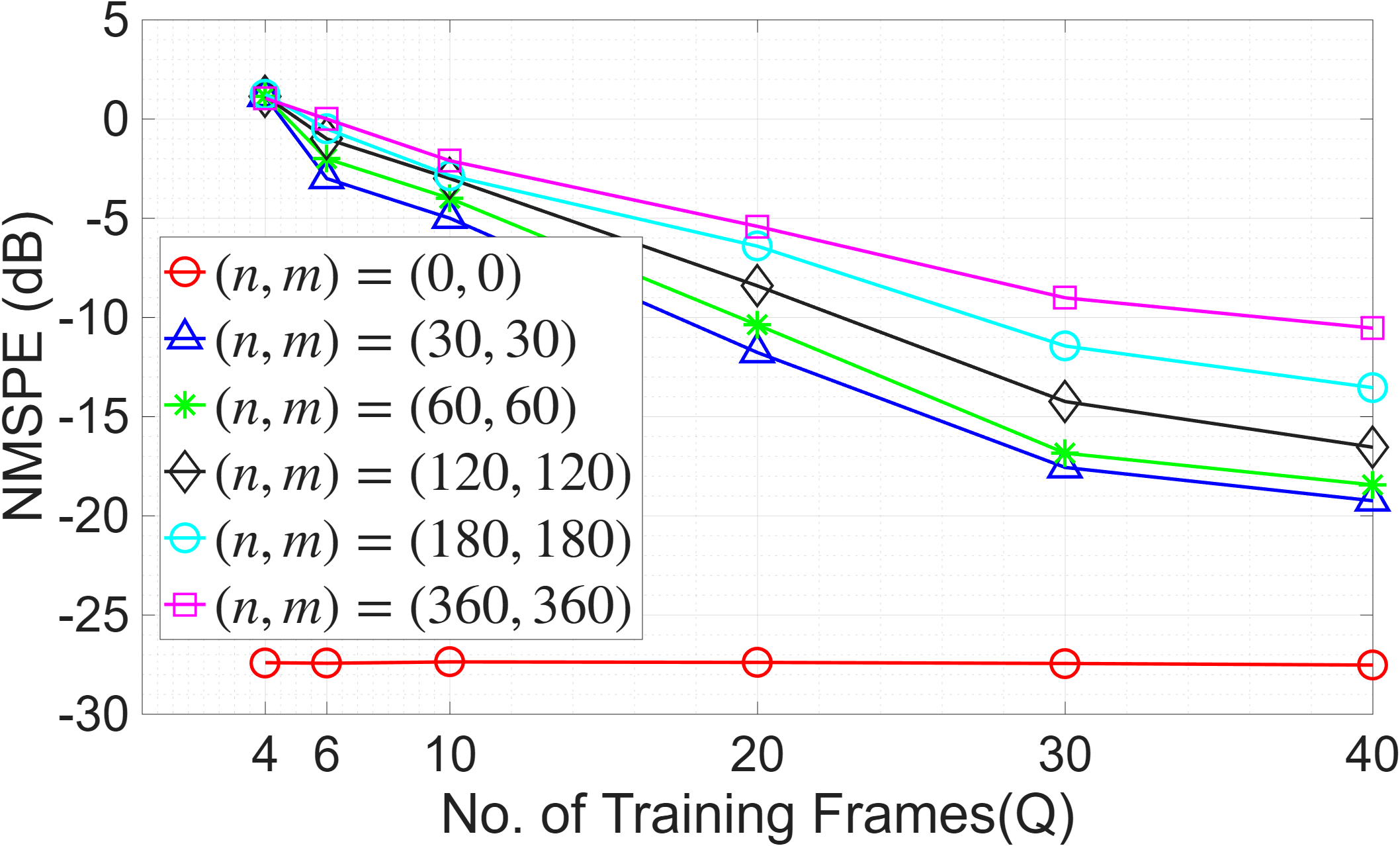}
        \hspace{-3mm}
    \caption{NMSPE vs $Q$. TPNR is $18$ dB and PDR is $0$ dB in training frames. Other parameters are same as in Fig.~\ref{figcrlb}.}
\label{NMSPEQ}
        \vspace{-3mm}
    \end{figure}In Fig.~\ref{NMSPEQ}, we plot the NMSPE as a function of increasing $Q$ for a fixed TPNR of $18$ dB and PDR of $0$ dB in the training frames. As expected, NMSPE decreases steadily with increasing $Q$ since in the estimation of the time-phase, frequency-phase and DD signature matrix parameters, the number of observations of the effective DD domain channel increases with increasing $Q$, while the number of parameters is fixed. We also observe that for a given $Q$, the NMSPE degrades significantly beyond $(n,m) = (120, 120)$. For example, for $Q=30$, the NMSPE for $(n,m) = (360, 360)$ is roughly $-8$ dB when compared to $-14.25$ dB for $(n,m) = (120, 120)$.

    \begin{figure}
     \centering
\includegraphics[width=9.0cm,height=5.6cm]{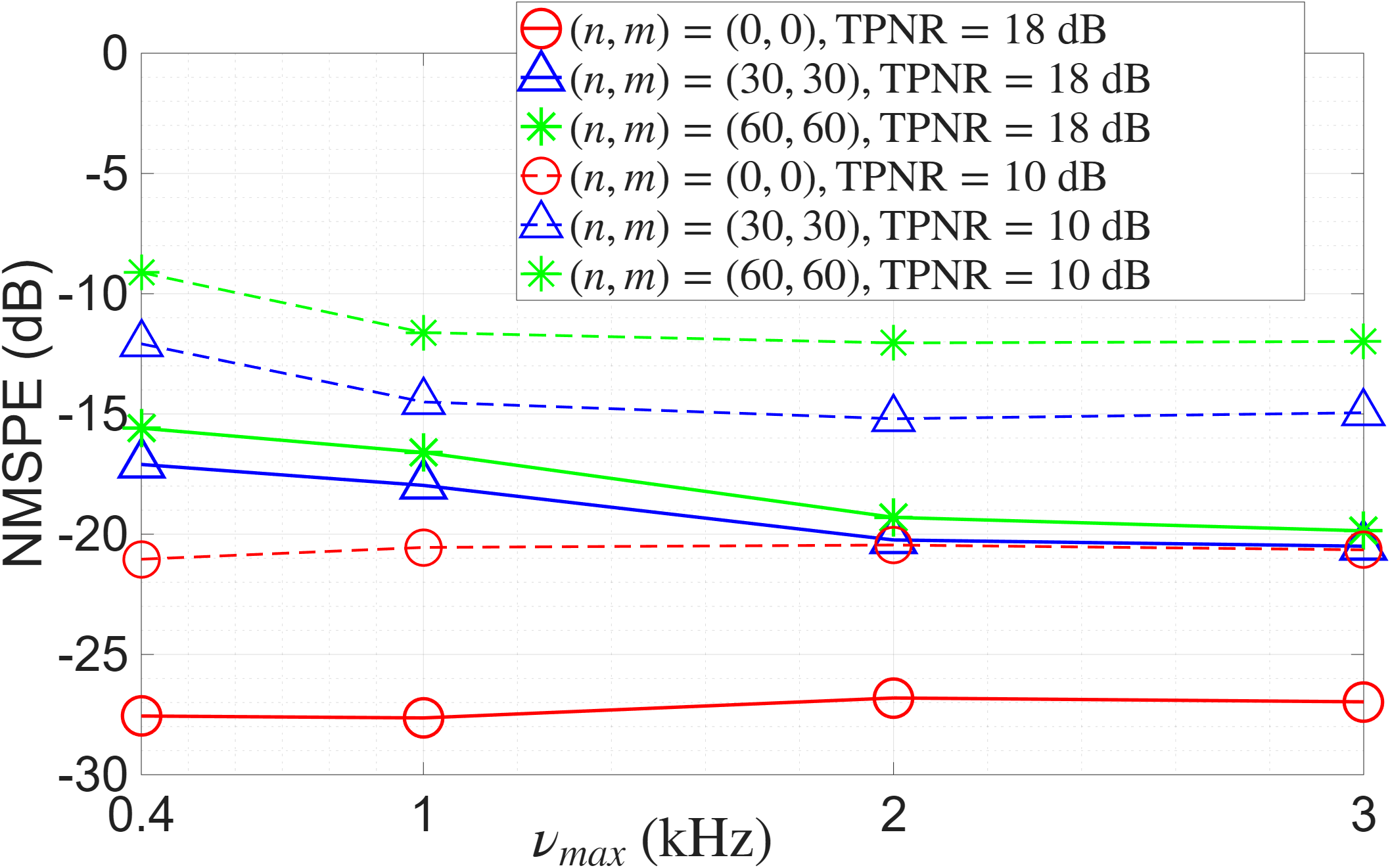}
        \hspace{-3mm}
    \caption{NMSPE vs $\nu_{max}$ (in kHz). $Q=30$. Other parameters are same as in Fig.~\ref{figcrlb}.}
    \vspace{-3mm}
\label{NMSPEnumax}
        \vspace{-3mm}
    \end{figure}In Fig.~\ref{NMSPEnumax}, we plot the NMSPE as a function of increasing $\nu_{max}$ for a fixed TPNR of $18$ dB and $10$ dB and fixed $Q=30$. It is observed that the NMSPE improves with increasing $\nu_{max}$ and is almost constant for high $\nu_{max}$. This is because, with increasing $\nu_{max}$, the path Doppler shifts become increasingly separable, i.e., the probability of $\vert \nu_i - \nu_j \vert$, $i \ne j, i,j =1,2,\cdots, P$ taking values much smaller than $1/T'$ decreases. This implies that the time-phase parameters $\alpha_i, i=1,2,\cdots,P$ are separable/distinct due to which the Vandermonde-type matrix ${\bf \Phi}_{Q-1,0}$ in (\ref{eqn37}) is well-conditioned which results in lower amplification of the estimation error in $\widehat{\bf \Phi}_{Q-1,0}$ when we compute its pseudo-inverse during the estimation of $ \widehat{\bf A}_t  = \widehat{\bf H}_{t,Q-1,0} \, \widehat{\bf \Phi}_{Q-1, 0}^\dagger$ in (\ref{eqn63}).
    \begin{figure}
     \centering
\includegraphics[width=9.0cm,height=5.6cm]{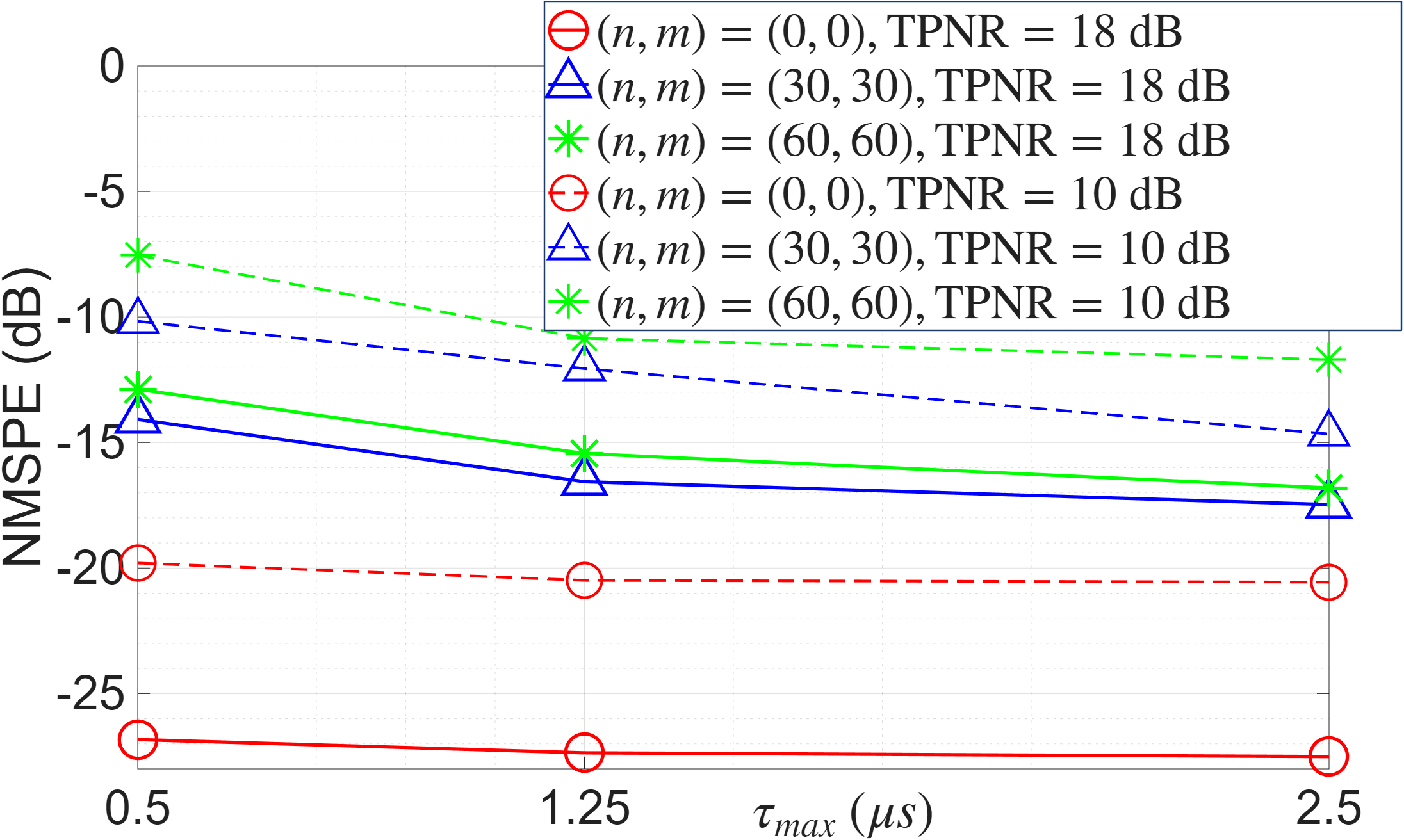}
        \hspace{-3mm}
    \caption{NMSPE vs $\tau_{max}$ (in $\mu s$). Fixed $\nu_{max} = 1$ kHz. Other parameters are same as in Fig.~\ref{figcrlb}. $Q=30$.}
    \vspace{-1mm}
\label{NMSPEtaumax}
        \vspace{-1mm}
    \end{figure}In Fig.~\ref{NMSPEtaumax} we plot NMSPE as a function of increasing maximum channel delay spread $\tau_{max}$ for a fixed $\nu_{max} = 1$ kHz, $Q=30$. It is observed that the NMSPE improves with increasing $\tau_{max}$ since the channel paths becomes more separable along the delay axis.
    \begin{figure}
     \centering
\includegraphics[width=9.0cm,height=5.6cm]{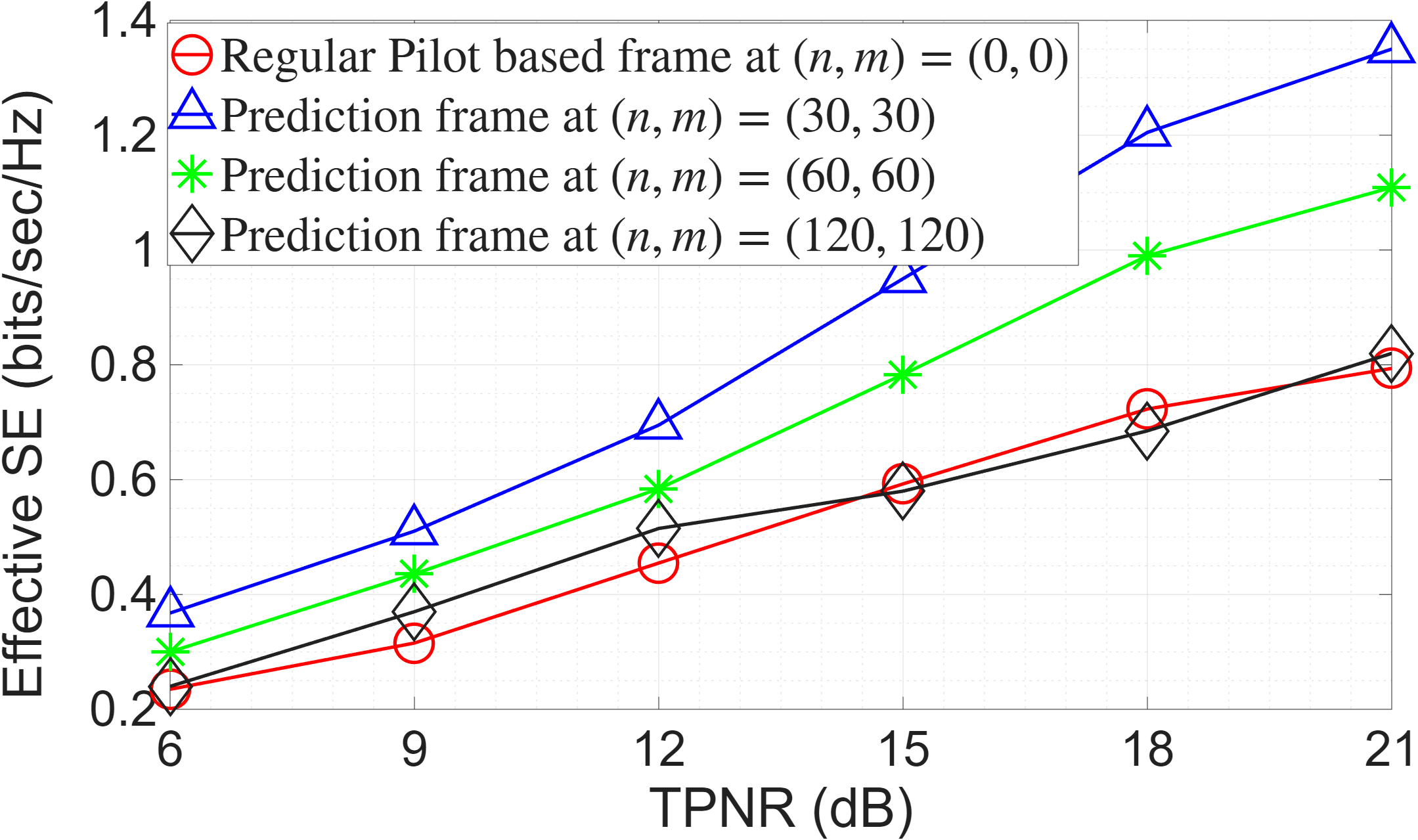}
        \hspace{-3mm}
    \caption{Effective Spectral Efficiency (bits/s/Hz) vs TPNR (in dB). Fixed $\nu_{max} = 1$ kHz. Other parameters same as Fig.~\ref{figcrlb}. $Q=30$. PDR is $0$ dB.}
    \vspace{-3mm}
\label{SEvsTPNR}
        \vspace{-3mm}
    \end{figure}    

In Fig.~\ref{SEvsTPNR} we compare the effective spectral efficiency (SE)
achieved in a traditional non-prediction based Zak-OTFS frame where a dedicated pilot is used to acquire the DD domain effective channel filter,
to that achieved in a frame where no pilot is transmitted but for which the effective DD channel filter is predicted using the proposed prediction method. The Zak-OTFS DD frame structure for the traditional frame is shown in Fig.~\ref{fig_singlepilot}. At the transmitter, the information bits are coded using a Low Density Parity Check (LDPC) code. The coded symbols are then mapped to QAM symbols (i.e., $x_{n,m}[k,l]$). The effective SE is $N_I(1 - \text{BLER})/(B'T')$ bits/sec/Hz where $N_I$ is the number of information bits communicated in the Zak-OTFS frame and BLER is the average block error rate.
The number of information bits $N_I$ is smaller in traditional frames when compared to prediction frames due to the pilot and guard region overhead in traditional frames. We choose the modulation and coding scheme (MCS) (i.e., LDPC code rate and QAM order) based on the 3GPP 5G NR standard \cite{3gppmcs1, 3gppmcs2}. To be precise, for each TPNR we choose the highest MCS for which the achieved BLER is less than or equal to $0.1$.

In Fig.~\ref{SEvsTPNR}, we observe that even for a prediction frame $60T' = 60$ ms in future and $60B' = 216$ MHz away, we achieve a $30$ percent improvement in effective SE when compared to a traditional pilot-based frame, for a moderate TPNR of $15$ dB. This is because, with the proposed prediction method the prediction frames need not carry a dedicated pilot and therefore there is no pilot and guard region overhead. The SE gains are significant even for a low TPNR of $6$ dB. 

\appendices

\section{Proof of Theorem \ref{thm1}}
\label{appen_thm1}
From (\ref{eqn8}) and (\ref{eqn1212}) we have

{\small
\vspace{-4mm}
\begin{eqnarray}
\label{eqn1313}
x_{n,m,dd}^{w}(\tau , \nu ) & = & w_{n,m,tx}(\tau, \nu) \, *_{\sigma} x_{n,m,dd}(\tau , \nu ), \nonumber \\
& & \hspace{-21mm} = \delta(\tau - nT') \delta(\nu - mB') *_{\sigma} w_{0,0,tx}(\tau, \nu) *_{\sigma} x_{n,m,dd}(\tau , \nu ), \nonumber \\
& & \hspace{-21mm} = \delta(\tau - nT') \delta(\nu - mB') *_{\sigma} x_{n,m,dd}^{w0}(\tau , \nu ), \nonumber \\
x_{n,m,dd}^{w0}(\tau , \nu ) & \Define &  w_{0,0,tx}(\tau, \nu) *_{\sigma} x_{n,m,dd}(\tau , \nu ).
\end{eqnarray}\normalsize}The transmit pulse shaping filter $w_{0,0,tx}(\tau, \nu)$ used for
the $(0,0)$-th frame localizes the transmit signal to ${\mathcal I}_{0,0}$, and therefore the TD realization of $x_{n,m,dd}^{w0}(\tau , \nu )$, i.e., $x_{n,m}^{w0}(t) \Define {\mathcal Z}_t^{-1}\left( x_{n,m,dd}^{w0}(\tau , \nu ) \right)$ is localized to the TF region ${\mathcal I}_{0,0}$. Further simplifying (\ref{eqn1313}) using (\ref{eqneqn9}), we get
\begin{eqnarray}
    x_{n,m,dd}^{w}(\tau , \nu ) & \hspace{-3mm} = & \hspace{-3mm} e^{j2\pi mB' (\tau - nT')} \, x_{n,m,dd}^{w0}(\tau - nT', \nu - mB'). \nonumber \\
\end{eqnarray}From (\ref{eqn10}),
the signal transmitted in the $(n,m)$-th frame is given by
\begin{eqnarray}
    s_{n,m}(t) & = & \sqrt{\tau_p} \int\limits_{0}^{\nu_p} x_{n,m,dd}^{w}(t , \nu ) \, d\nu, \nonumber \\
    & & \hspace{-17mm} = e^{j2\pi mB' (t - nT')} \sqrt{\tau_p} \int\limits_{0}^{\nu_p} x_{n,m,dd}^{w0}(t - nT', \nu - mB') \, d\nu, \nonumber \\
    & & \hspace{-17mm} \mya e^{j2\pi mB' (t - nT')} \sqrt{\tau_p} \int\limits_{0}^{\nu_p} x_{n,m,dd}^{w0}(t - nT', \nu) d\nu, \nonumber \\
    & & \hspace{-17mm} \myb e^{j2\pi mB' (t - nT')} x_{n,m}^{w0}(t - nT'),
\end{eqnarray}where step (a) follows from the fact that the integral in the RHS is over one Doppler period and $x_{dd}(t,\nu)$ is periodic along the Doppler axis with period $\nu_p$. Step (b) follows from the inverse Zak transform in (\ref{eqn10}). Since $x_{n,m}^{w0}(t)$ is localized to ${\mathcal I}_{0,0}$ it follows that $e^{j2\pi mB' (t - nT')} x_{n,m}^{w0}(t - nT')$ is localized to ${\mathcal I}_{n,m}$.

\section{Proof of Theorem \ref{thm2}}
\label{appen_thm2}
From (\ref{eqn2}) and (\ref{eqneqn16}) it follows that the AWGN free component of the received signal $r(t)$ is
\begin{eqnarray}
\sum\limits_{n,m \in {\mathbb Z}} \iint h_{\text{phy}}(\tau', \nu') \, s_{n,m}(t - \tau') \, e^{j2\pi\nu'(t - \tau')} d\tau' \, d\nu'. \nonumber \\
\end{eqnarray}Taking its Zak transform we get
\begin{eqnarray}
{\mathcal Z}_t\left( \sum\limits_{n,m \in {\mathbb Z}} \iint h_{\text{phy}}(\tau', \nu') \, s_{n,m}(t - \tau') \, e^{j2\pi\nu'(t - \tau')} d\tau' \, d\nu' \right) &  & \nonumber \\
& & \hspace{-95mm} = \hspace{-2mm} \sum\limits_{p,q \in {\mathbb Z}} \iint h_{\text{phy}}(\tau', \nu') \, {\mathcal Z}_t\left( s_{p,q}(t - \tau') \, e^{j2\pi\nu'(t - \tau')}\right) d\tau' \, d\nu', \nonumber \\
& & \hspace{-95mm} \mya \hspace{-2mm} \sum\limits_{p,q \in {\mathbb Z}} \hspace{-1mm} \iint \hspace{-2mm} h_{\text{phy}}(\tau', \nu') \, e^{j2\pi\nu'(\tau - \tau')} \, x_{p,q,dd}^w(\tau - \tau', \nu - \nu') d\tau' \, d\nu', \nonumber \\
& & \hspace{-95mm} \myb \hspace{-2mm} \sum\limits_{p,q \in {\mathbb Z}}  h_{\text{phy}}(\tau, \nu) *_{\sigma}  x_{p,q,dd}^w(\tau, \nu), \nonumber \\
& & \hspace{-95mm} \myc \hspace{-2mm} \sum\limits_{p,q \in {\mathbb Z}}  h_{\text{phy}}(\tau, \nu) *_{\sigma}  w_{p,q,tx}(\tau, \nu) *_{\sigma} x_{p,q,dd}(\tau, \nu),
\end{eqnarray}where step (a) follows from the fact that for any signal $a(t)$ having Zak transform $a_{dd}(\tau, \nu) = {\mathcal Z}_t(a(t))$, the DD representation of $e^{j 2 \pi \nu' (t - \tau')} a(t - \tau')$ is $e^{j 2\pi \nu' (\tau - \tau')} \, a_{dd}(\tau - \tau' , \nu - \nu')$ (see \cite{DerivOTFS,otfsbook} for details).
Step (b) follows from (\ref{eqneqn9}).
Step (c) follows from (\ref{eqn8}).
From (\ref{eqnref19}) it follows that

{\small
\vspace{-4mm}
\begin{eqnarray}
\label{eqneqn27}
    y_{n,m,dd}^w(\tau, \nu) & = & w_{n,m,rx}(\tau, \nu) *_{\sigma} {\mathcal Z}_t(r(t)), \nonumber \\
    & & \hspace{-25mm} \mya \hspace{-2mm} \sum\limits_{p,q \in {\mathbb Z}}  \underbrace{w_{n,m,rx}(\tau, \nu) *_{\sigma} h_{\text{phy}}(\tau, \nu) *_{\sigma}  w_{p,q,tx}(\tau, \nu)}_{=h_{n,m,p,q}(\tau, \nu)} *_{\sigma} x_{p,q,dd}(\tau, \nu), \nonumber \\
    & & + \, w_{n,m,rx}(\tau,\nu) *_{\sigma} z_{dd}(\tau, \nu), \nonumber \\
    & & \hspace{-25mm} = h_{n,m,n,m}(\tau, \nu) *_{\sigma} \, x_{n,m,dd}(\tau, \nu) \, + \, w_{n,m,rx}(\tau,\nu) *_{\sigma} z_{dd}(\tau, \nu) \nonumber \\
    & & + \hspace{-4mm} \sum\limits_{\substack{(p,q) \neq (n,m) \\
    p,q \in {\mathbb Z}} } \hspace{-4mm}h_{n,m,p,q}(\tau,\nu) *_{\sigma} x_{p,q,dd}(\tau, \nu),
\end{eqnarray}\normalsize}where step (a) follows from (\ref{eqneqn24}). In step (a), $z_{dd}(\tau, \nu) = {\mathcal Z}_t(z(t))$.
We next use the following result from \cite{otfsbook,zakotfs1} which is as follows. Given $b_{dd}(\tau,\nu) = g(\tau,\nu) *_{\sigma} s_{dd}(\tau,\nu)$ where $s_{dd}(\tau,\nu)= \sum\limits_{k,l \in {\mathbb Z}} s_{dd}[k,l]\delta(\tau - k/B)\delta(\nu - l/T)$ and $b_{dd}(\tau,\nu)$ are continuous
quasi-periodic DD domain functions with delay and Doppler period $\tau_p$ and $\nu_p$ respectively, then after sampling on $\Lambda$ we get $b_{dd}[k,l] = b_{dd}(\tau = k/B, \nu = l/T) = g[k,l] *_{\sigma} s_{dd}[k,l]= \sum\limits_{k',l' \in {\mathbb Z}} g[k',l'] s_{dd}[k - k', l - l'] \, e^{j 2 \pi l'(k - k')/(MN)}$, $g[k,l] = g(\tau = k/B, \nu = l/T)$. Using this result in (\ref{eqneqn27}) gives (\ref{thm2eqn1})
which completes the proof.


\section{Proof of Theorem \ref{thm3}}
\label{appen_thm3}
From \cite{zakotfs3} we know that
if $a(\tau,\nu) = b(\tau,\nu) *_{\sigma} c(\tau, \nu)$, then the matched filter of $a(\tau,\nu)$ is $a_{\text{mf}}(\tau,\nu) = a^*(-\tau, -\nu) \, e^{j 2 \pi \nu \tau} = c_{\text{mf}}(\tau, \nu) *_{\sigma} b_{\text{mf}}(\tau, \nu)$.
Using this fact in (\ref{eqn1212}),
the received matched filter $w_{n,m,rx}(\tau, \nu)$ in (\ref{eqne1818}) is given by
\begin{eqnarray}
\label{eqn29749}
    w_{n,m,rx}(\tau,\nu) = {\Big (}  w_{0,0,tx}^*(-\tau, -\nu) e^{j 2 \pi \nu \tau} *_{\sigma} & &  \nonumber \\
    & & \hspace{-43mm} \delta(\tau + nT')\delta(\nu + mB') \, e^{j2 \pi n m B' T'} {\Big )},
\end{eqnarray}where we have used the fact that the matched filter for $\delta(\tau - nT')\delta(\nu - mB')$ is simply $\delta(\tau + nT')\delta(\nu + mB') \, e^{j2 \pi n m B' T'}$.
In (\ref{eqne1818}),
the receive matched filter for the $(0,0)$-th frame is
\begin{eqnarray}
    \label{eqneqn12478}
    w_{0,0,rx}(\tau,\nu) & = & w^*_{0,0,tx}(-\tau, -\nu) \, e^{j 2\pi \nu \tau}.
\end{eqnarray}Using (\ref{eqneqn12478}) in (\ref{eqn29749}) we finally get
\begin{eqnarray}
\label{eiq7739}
    w_{n,m,rx}(\tau, \nu) & = & w_{0,0,rx}(\tau,\nu) *_{\sigma} \nonumber \\
    & & \delta(\tau + nT')\delta(\nu + mB') \, e^{j2 \pi n m B' T'}.
\end{eqnarray}
Using (\ref{eqn1212}) and (\ref{eiq7739}) in (\ref{eqneqn24}) it follows that
\begin{eqnarray}
\label{eqn395696}
    h_{n,m,p,q}(\tau, \nu) & = & w_{n,m,rx}(\tau,\nu) *_{\sigma} h_{\text{phy}}(\tau, \nu) *_{\sigma} w_{p,q,tx}(\tau,\nu), \nonumber \\
    & & \hspace{-20mm}=  w_{0,0,rx}(\tau, \nu) *_{\sigma} h_{n,m,p,q, \text{phy}}(\tau, \nu) *_{\sigma} w_{0,0,tx}(\tau, \nu),
    \end{eqnarray}
\begin{eqnarray}
\label{eqn297495}
    h_{n,m,p,q,\text{phy}}(\tau, \nu) & \Define & \left( \delta(\tau + nT')\delta(\nu + mB') \, e^{j2 \pi n m B' T'} \right) \nonumber \\
    & & \hspace{-10mm} *_{\sigma} \, h_{\text{phy}}(\tau, \nu) \, *_{\sigma} {\Big (} \delta(\tau - pT') \delta(\nu - qB') {\Big )}, \nonumber \\
    & & \hspace{-30mm} = {\Big (} e^{j 2 \pi p(m - q) B' T'} \, e^{j 2 \pi \left( \nu p T' - \tau m B' \right)}  \nonumber \\
    & & \hspace{-20mm} h_{\text{phy}}(\tau + (n - p)T', \nu + (m-q)B') {\Big )},
\end{eqnarray}where the last step follows from (\ref{eqneqn9}).
Using the expression for $h_{\text{phy}}(\tau, \nu)$ in (\ref{eqn1}) we get the expression for $h_{n,m,p,q,\text{phy}}(\tau, \nu)$ in terms of $h_i,\tau_i, \nu_i$, $i=1,2,\cdots,P$, in (\ref{eqn94u03850}) (see top of next page).
\begin{figure*}
\vspace{-8mm}
\begin{eqnarray}
\label{eqn94u03850}
    h_{n,m,p,q,\text{phy}}(\tau, \nu) & = &   e^{j 2 \pi m(n-p)B'T'} \sum\limits_{i=1}^P h_i \, e^{j2 \pi \left(pT'\nu_i - mB'\tau_i \right)} \, \delta(\tau - \tau_i - (p -n)T') \, \delta(\nu - \nu_i - (q - m)B').
\end{eqnarray}
\vspace{-4mm}
\begin{eqnarray*}
    \hline
\end{eqnarray*}
\end{figure*}For $(p,q) = (n,m)$, from (\ref{eqn94u03850}) we get
\begin{eqnarray}
\label{eqn204653}
    h_{n,m,n,m,\text{phy}}(\tau, \nu)  & \hspace{-2mm} = & \hspace{-2mm} \sum\limits_{i=1}^P h_i \alpha_i^n \beta_i^m \delta(\tau - \tau_i) \delta(\nu - \nu_i),
\end{eqnarray}where $\alpha_i, \beta_i$, $i=1,2,\cdots,P$ is given by (\ref{eqnthm3}). Using (\ref{eqn204653}) in (\ref{eqn395696}) we get
\begin{eqnarray}
\label{eqn924950}
    h_{n,m,n,m}(\tau,\nu) & = & \sum\limits_{i=1}^P  \alpha_i^n  \, \beta_i^m \, A_i(\tau, \nu),
\end{eqnarray}where $A_i(\tau, \nu)$ is given by (\ref{eqnthm3}). From (\ref{thm2eqn3}) and (\ref{eqn924950})
it follows that
\begin{eqnarray}
    h_{n,m}[k,l] & = & h_{n,m,n,m}\left( \tau = \frac{k}{B} , \nu = \frac{l}{T}\right), \nonumber \\
    & = & \sum\limits_{i=1}^P \alpha_i^n \beta_i^m A_i[k,l],
\end{eqnarray}where $A_i[k,l]$ is given by (\ref{eqnthm3}).

\section{Proof of Lemma \ref{lem1}}
\label{appen_lem1}
The factorization follows directly from the expression of $h_{n,m}[k,l]$ in (\ref{eqnthm3}) of Theorem \ref{thm3}.
The time-phase matrix $\Phi_{q_1,q_2}$ in (\ref{eqn37}) can be factored as

{\vspace{-4mm}
\small
\begin{eqnarray}
\label{eqn28648}
     \Phi_{q_1, q_2} & & \nonumber \\
     & & \hspace{-14mm} = \begin{bmatrix}
         \alpha_1^{-q_2} & 0 & \dots & 0 \\
         0 & \alpha_2^{-q_2} & \dots & 0 \\
         \vdots & \vdots & \vdots & \vdots \\
         0 & 0 & \dots & \alpha_P^{-q_2} \, 
     \end{bmatrix} 
     \begin{bmatrix}
         \alpha_1^{-(q_1-q_2)} &  \dots &  \alpha_1^{-1} &  1 \\
          \alpha_2^{-(q_1-q_2)} &  \dots &  \alpha_2^{-1} &  1\\
        \vdots & \vdots & \vdots \\
          \alpha_P^{-(q_1-q_2)} &  \dots &  \alpha_P^{-1} &  1\\
    \end{bmatrix}. \nonumber \\
\end{eqnarray}\normalsize}
The R.H.S. in (\ref{eqn28648}) is a product of two matrices, a full rank square diagonal $P \times P$ matrix and a Vandermonde matrix \cite{Golub, Horn13}. The Vandermonde matrix has full row rank $P$ since all $\alpha_i, i=1,2,\cdots, P$ are distinct (see Assumption $4$) and $P \leq (q_1 - q_2 + 1)$ (see (\ref{rankcondition})). This therefore proves that $\Phi_{q_1, q_2}$ has full row rank $P$.
From Assumption $2$ and $3$ we know that $N_t \geq P$ and ${\bf A}$ has full column rank $P$.
Since ${\bf H}_{t,q_1,q_2} = {\bf A} \Phi_{q_1, q_2}$, it therefore follows that $H_{t,q_1,q_2}$ has rank $P$.


\end{document}